\theoremstyle{plain}
\newtheorem{prop}{Proposition}[section]
\newtheorem{theorem}[prop] {Theorem}
\theoremstyle{definition}
\newtheorem{lemma}[prop]{Lemma}
\newtheorem{definition}[prop]{Definition}
\newtheorem{corollary}[prop]{Corollary}
\theoremstyle{remark}
\newtheorem*{remark}{Remark}
\newtheorem*{example}{Example}
\newcommand{\N}{\mathbb{N}}
\newcommand{\R}{\mathbb{R}}
\newcommand{\Z}{\mathbb{Z}}
\newcommand{\C}{\mathbb{C}}
\renewcommand{\P}{\mathbb{P}}
\newcommand{\dd}{\mathrm{d}} 
\newcommand{\eps}{\epsilon}
\renewcommand{\eps}{\varepsilon}
\renewcommand{\phi}{\varphi}
\newcommand{\vect}[1]{\boldsymbol{#1}}
\DeclareMathOperator{\supp}{supp}
\title{Cluster and virial expansions for the multi-species Tonks gas}
\author{S. Jansen}
\thanks{\emph{Address}: Ruhr-Universit{\"a}t Bochum, Fakult{\"a}t f{\"u}r Mathematik, 44780 Bochum, Germany.\\
\emph{Email}: sabine.jansen@rub.de}
\date{9 March 2015}
\begin{document}

\begin{abstract}
	We consider a mixture of non-overlapping rods of different lengths $\ell_k$ moving in $\mathbb{R}$ or $\mathbb{Z}$.
        Our main result are necessary and sufficient convergence criteria for the expansion of the pressure in terms of the activities $z_k$ and the densities $\rho_k$.  This provides an explicit example against which to test known cluster expansion criteria, and illustrates that for non-negative interactions, the virial expansion can converge in a domain much larger than the activity expansion. In addition, we give explicit formulas that generalize the well-known relation between non-overlapping rods and labelled rooted trees. We also prove that for certain choices of the activities, the system can undergo a condensation transition akin to that of the zero-range process. The key tool is a fixed point equation for the pressure. \\

\noindent \emph{Keywords:} Cluster and virial expansions, non-overlapping rods, combinatorics of labelled colored trees, close-packing transition.
\end{abstract}
\maketitle


\section{Introduction} 

The present article deals with one-dimensional systems of non-overlapping rods on a continuous segment $[0,L]$  or a discrete interval $\{0,1,\ldots,L-1\}$. There are countably many types $k$ of rods, coming each with a length $\ell_k\geq 0$ and an activity $z_k$. A rich literature deals with related models: our model is a multi-species variant of the well-known \emph{Tonks gas} \cite{tonks36}.  We may also view it as a one-dimensional special case of hard spheres mixtures \cite{lebowitz-rowlinson64}.  A good control of discrete one-dimensional partition functions enters as a building block for two-dimensional models with orientational long range order~\cite{ioffe-velenik-zahradnik06, disertori-giuliani13}. The one-dimensional discrete system of rods also appears in stationary distributions for driven one-dimensional systems~\cite{kafri02}, which in turn are closely related to the zero-range process where particles are piled up rather than aligned in a rod \cite{evans-hanney05}. Phase transitions for one-dimensional cluster models have been studied in detail by Fisher and Felderhof~\cite{fisher-felderhof70,fisher72}. 

Our principal motivation comes from the model's solvability and the specific, though model-dependent, answers to questions on cluster expansions it allows. The first question concerns domains of convergence. There are many sufficient convergence criteria available, but it is an ongoing effort to improve them, see for example~\cite{fernandez-procacci, bissacot-fernandez-procacci}. This raises the question of how much room for improvement there actually is. We answer this question for the multi-species Tonks gas by determining the exact domain of convergence (see Eqs.~\eqref{eq:continuous} and~\eqref{eq:discrete} below). The answer for general models can of course be quite different, but we hope that our results will serve as a helpful control group in future studies. 

The second question is how the domain of convergence of the activity expansion compares to the domain of analyticity. It is common wisdom that for repulsive interactions, the activity expansion ceases to converge before a phase transition occurs~\cite{ruelle-book}. For single-species model this means that the radius of convergence is strictly smaller than the activity value at which a phase transition occurs, if it occurs at all. We prove that for the multi-species Tonks gas with rod lengths $\ell_k=k$, the  difference between convergence and analyticity domain is even more drastic: for the convergence of the activity expansion it is necessary that the activities go to zero exponentially fast $z_k= O(\exp(-ak))\to 0$, while the pressure stays analytic all the way up to exponentially diverging activities $z_k \to \infty$ (Corollary~\ref{cor:domains}).

The third question concerns the virial expansion, i.e., the expansion of the pressure in terms of the densities $\rho_k$. It has been suggested that the virial expansion can be more advantageous than the activity expansion~\cite{brydges-iamp}, and indeed for some models this is known to be true ~\cite{joyce88,brydges-marchetti14}. 
 We prove that the same holds for the multi-species Tonks gas, again in a quite drastic way: the virial expansion converges in all of the analyticity region (Theorems~\ref{thm:vir} and~\ref{thm:vir-discrete}, Corollary~\ref{cor:domains}), including densities $\rho_k(z_1,z_2,\ldots)$ that correspond to exponentially diverging activities $z_k \to \infty$ (when $\ell_k =k$). 

In addition, we provide explicit formulas for the pressure-activity expansion that are interesting from a combinatorics point of view (Theorems~\ref{thm:cluster-convergence} and~\ref{thm:convergence-discrete}). In the continuous case, we find that the activity expansion is (up to signs) the multivariate exponential generating function for labelled rooted colored trees. The vertex colors correspond to rod types and the trees have weights that depend on the lengths $\ell_k$. This generalizes the well-known relationship between the exponential generating function $T(z) = \sum_{n\geq 1} z^n n^{n-1}/n!$ of rooted labelled trees and the pressure for non-overlapping rods of length $1$ (see~\cite{brydges-imbrie} and the references therein). It would be interesting to know whether the answers to corresponding combinatorial puzzles given in~\cite{bernardi08,tate14} extend to the multi-species setting. 

Let us describe in more detail our results on the convergence domain of the activity expansion of the pressure in the infinite volume limit. In the continuous case, the expansion converges absolutely if and only if (Theorem~\ref{thm:cluster-convergence})\footnote{The criterion~\eqref{eq:continuous} refers to the pressure-activity expansion $p(\vect{z})$. The convergence of the density-activity expansion $\rho_k(\vect{z})$ in general requires a strict inequality $\sum_k|z_k|\exp( a\ell_k) <a$. The same remark applies to the discrete system.}
\begin{equation}  \label{eq:continuous}
  \exists a>0: \quad \sum_{k=1}^\infty |z_k|e^{a \ell_k} \leq a.
\end{equation} 
In the discrete case we may choose $\ell_k =k$, and the cluster expansion for the pressure  converges if and only if (Theorem~\ref{thm:convergence-discrete})
\begin{equation}  \label{eq:discrete}
  \exists a>0: \quad \sum_{k=1}^\infty |z_k| e^{ak} \leq e^a-1.  
\end{equation}
The principal novelty lies in the \emph{only if} part: if the criterion fails, then the expansion is not absolutely convergent. Eqs.~\eqref{eq:continuous} and~\eqref{eq:discrete} should be compared with the following known \emph{sufficient} criteria. In the continuous case, it is enough that for some $a>0$ and all $k\in \N$ \cite{poghosyan-ueltschi09}
\begin{equation} \label{eq:kp} 
  \sum_{j=1}^\infty \int_{-\infty}^\infty| z_j| e^{a \ell_j} \mathbf{1}\Bigl( [x,x+\ell_j]\cap [ 0,\ell_k]\neq \emptyset \Bigr) \dd x = \sum_{j=1}^\infty (\ell_j + \ell_k) z_j e^{a \ell_j} \leq a \ell_k. 
\end{equation} 
In the discrete case, it is enough that for some $a>0$ \cite{gruber-kunz71,fernandez-procacci} 
\begin{equation}\label{eq:gruber-kunz}
  \sum_{x\in \Z,\ k \in \N} |z_k| \mathbf{1}\bigl( \{x,\ldots,x+k-1\} \ni 0) e^{ak} = \sum_{k=1}^\infty k e^{ak}|z_k| \leq e^{a} -1. 
\end{equation} 
The activity domains determined by the sufficient criteria~\eqref{eq:kp} and~\eqref{eq:gruber-kunz} are clearly smaller than the full convergence domains given by \eqref{eq:continuous} and~\eqref{eq:discrete}, however we shall take the point of view that the difference is small: in the discrete case both~\eqref{eq:discrete} and~\eqref{eq:gruber-kunz} are of the form $||\boldsymbol{z}||_a \leq \exp(a)-1$   with weighted norms $||\boldsymbol{z}||_a$ that impose the same exponential decay of $z_k$ and differ merely by the prefactor $k$. From this point of view the one-dimensional model of non-overlapping rods provides an example for which the classical convergence criteria are already nearly optimal.\footnote{This point of view focuses on \emph{qualitative} features of the domain of convergence for objects of \emph{unbounded} size ($\ell_k\to \infty$). A different question is about \emph{quantitative} estimates on the radius of convergence for hard-sphere systems of \emph{bounded} size, see e.g.  \cite{fernandez-procacci-scoppola}.}

A full list of results is given in the next section. In addition to  exact formulas for the activity and density expansions as well as their domains of convergence, we prove that the pressure solves a fixed point equation. In the continuous case it reads 
\begin{equation} \label{eq:pressure-equation}  
	p(\vect{z}) = \sum_k z_k \exp\Bigl(- \ell_k p(\vect{z}) \Bigr)
\end{equation} 
and is satisfied by the pressure whenever the equation has a solution (Theorem~\ref{thm:pressure}).
In the discrete case the equation is instead (Theorem~\ref{thm:pressure-discrete})
\begin{equation} \label{eq:fixed-point-discrete} 
    1- \exp\bigl(- p(\vect{z})\bigr)= \sum_k z_k \exp( - k p(\vect{z})).
\end{equation}
(remember $\ell_k = k$).
Situations where the fixed point equation has no solution are possible and correspond to a close-packing regime (Theorem~\ref{thm:pack-frac}). Section~\ref{sec:explanations} provides two different explanations of the fixed point equation, a statistical mechanics explanation and a probabilistic explanation in terms of renewal equations. 

The fixed point equation is not only of interest in itself but also lies at the heart of all of our results and their proofs; it is no coincidence that Eqs.~\eqref{eq:pressure-equation} and~\eqref{eq:fixed-point-discrete} resemble so closely the convergence criteria~\eqref{eq:continuous} and~\eqref{eq:discrete} as well as functional equations for combinatorial generating functions. This will become clear in the proofs, given in Sections~\ref{sec:fixed-point} to~\ref{sec:virial}. Section~\ref{sec:inverse} proves a technical result (Theorem~\ref{thm:inverse}) on the inversion of the density-activity relation via the inverse function theorem.


\section{Model and results} \label{sec:results}

\subsection{Continuous system}

Let $(\ell_k)_{k\in \N}$ be a family of rod lengths $\ell_k \geq 0$ and $\vect{z} = (z_k)_{k\in \N}$ positive activities $z_k \geq 0$. The rod lengths do not need to go to infinity; in fact the case $\ell_k \to 0$ might be of interest in view of the transition towards dust studied in some fragmentation models~\cite{haas04}. Let $\mathcal{I}\subset \N_0^\N$ be the set of multi-indices $(N_k)_{k\in \N}$   that have none or finitely many non-vanishing entries, and $\mathcal{I}^*:= \mathcal{I}\backslash \{\vect{0}\}$ the set of multi-indices with at least one non-zero entry. 
For $\vect{n} \in \mathcal{I}$ we set
$\vect{n}!:= \prod_k (n_k!)$ and $\vect{z}^{\vect{n}}:=\prod_k z_k^{n_k}$,
with the  convention $0^0=1$ and $0!=1$. The grand-canonical partition function for the volume  $[0,L]$ is 
\begin{multline} \label{eq:partfct-cont}
	\Xi_L(\vect{z}) := 1+ \sum_{\boldsymbol{N} \in \mathcal{I}^*} \frac{\vect{z}^{\vect{N}}}{\vect{N}!} 
		\int_{[0,L]^{\sum_k N_k}}\dd x_{11} \cdots \dd x_{1N_1} \dd x_{21} \cdots \dd x_{2N_2} \cdots \\ \times \mathbf{1}\Bigl( \forall k,j:\ [x_{kj}, x_{kj} + \ell_k]\subset [0,L] \Bigr) \\
\times  \mathbf{1} \Bigl( \forall (k,j) \neq (m,i):\ [x_{kj}, x_{kj} + \ell_k] \cap  [x_{mi}, x_{mi} + \ell_m]= \emptyset \Bigr).
\end{multline}
In the integral $x_{kj}$ is the left end point of a rod of length $\ell_k$. The indicators ensure that rods do not overlap and lie entirely in $[0,L]$. The pressure is 
\begin{equation} \label{eq:pressure-def}
	p(\vect{z}):= \lim_{L\to \infty}\frac{1}{L} \log \Xi_L(\vect{z})
\end{equation} 
The existence of the limit in $[0,\infty) \cup\{\infty\}$  follows from general subadditivity arguments, a more  concrete determination is given in Theorem~\ref{thm:pressure} below. We shall see that the stability condition 
\begin{equation} \label{eq:stability} 
	\exists \theta \in \R:\quad \sum_k z_k\exp(\theta \ell_k) < \infty
\end{equation}
ensures that the limit defining $p(\vect{z})$ is finite. Note that we allow for $\theta<0$; in particular, for $\ell_k =k$, the activities are allowed to diverge as $\exp(ck)$.  
Let $\theta^*$  
\begin{equation} \label{eq:thetastar}
	\theta^*:= \sup\{\theta \in \R\mid  \sum_k z_k\exp(\theta \ell_k) < \infty\}
\end{equation}   
be the abscissa of convergence of  the Dirichlet type series 
\begin{equation} \label{eq:g}
     g(\theta) = \sum_k z_k \exp(\theta \ell_k).
\end{equation} 
For notational convenience we suppress the $\vect{z}$-dependence in $\theta^*$ and $g(\theta)$. Depending on the values of $\vect{z}$, the fixed point equation $g(\theta) = - \theta$ may or may not have a solution. As we shall see, the cases correspond to different physical behaviors and the following names are convenient.  

\begin{definition} \label{def:regimes}
  Let $\vect{z} = (z_k)_{k\in\N}$ be non-negative activities that satisfy the stability condition~\eqref{eq:stability}. Then $\vect{z}$ belongs to the 
  \begin{enumerate}
    \item [(a)] \emph{fluid domain} $\mathcal{D}_{\rm fluid}$ if $\theta^*=\infty$ or $\theta^*<\infty$ and $g(\theta^*)>-\theta^*$;
      \item [(b)] \emph{close-packing domain}  if $\theta^*<\infty$ and $g(\theta^*)<-\theta^*$. 
       \item [(c)] \emph{transition domain} if $g(\theta^*) = - \theta^*$.
 \end{enumerate}
\end{definition}
In the fluid domain the fixed point equation $g(\theta) = - \theta$ has a unique solution $\theta <\theta^*$, in the close-packing domain it has no solution, and in the transitional domain the unique solution is $\theta=\theta^*$. 

\begin{theorem}\label{thm:pressure} 
      Let $(z_k)_{k\in \N}$ be non-negative activities satisfying the stability condition~\eqref{eq:stability}. Then
	\begin{enumerate} 
		\item[(a)] In the fluid domain the pressure $p(\vect{z})$ is given by the unique solution $p>-\theta^*$ of 
			$\sum_k z_k \exp( - p \ell_k ) = p$ .
		\item [(b)] In the close-packing and transition domains the pressure is  $p(\vect{z}) =  -\theta^*$.  
	\end{enumerate}
\end{theorem}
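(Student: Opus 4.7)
The plan is to obtain a closed formula for the Laplace transform of $\Xi_L$ in the variable $L$ and read off the exponential growth rate $p(\vect{z})$ from its abscissa of convergence. I would begin by rewriting $\Xi_L$ explicitly: for fixed $\vect{N}\in\mathcal{I}$ with $n=\sum_k N_k$, the integral in~\eqref{eq:partfct-cont} factorizes into $n!$ orderings of the distinguishable rods from left to right, times the volume $(L-\sum_k N_k\ell_k)_+^n/n!$ of the simplex of $n+1$ non-negative gap lengths summing to $L-\sum_k N_k\ell_k$, yielding
\begin{equation*}
\Xi_L(\vect{z}) \;=\; \sum_{\vect{N}\in\mathcal{I}} \frac{\vect{z}^{\vect{N}}}{\vect{N}!}\, \bigl(L - \textstyle\sum_k N_k\ell_k\bigr)_+^{\,n}.
\end{equation*}
This makes $L\mapsto\Xi_L$ non-decreasing, and juxtaposition of configurations gives $\Xi_{L_1+L_2}\geq\Xi_{L_1}\Xi_{L_2}$, so by Fekete the limit $p(\vect{z})=\lim L^{-1}\log\Xi_L$ exists in $[0,\infty]$.

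Next, for $s>0$ the identity $\int_M^\infty e^{-sL}(L-M)^n\,\dd L = n!\,s^{-(n+1)}e^{-sM}$ together with Tonelli give
\begin{equation*}
\hat\Xi(s) \;:=\; \int_0^\infty e^{-sL}\Xi_L\,\dd L \;=\; \frac{1}{s}\sum_{n=0}^\infty\!\Bigl(\frac{g(-s)}{s}\Bigr)^{\!n},
\end{equation*}
a series of non-negative terms convergent iff $g(-s)<\infty$ (equivalently $s>-\theta^*$) and $g(-s)<s$, in which case $\hat\Xi(s)=1/(s-g(-s))$. The map $s\mapsto s-g(-s)$ is strictly increasing on $(-\theta^*,\infty)$ because term-by-term differentiation gives derivative $1+\sum_k z_k\ell_k e^{-s\ell_k}>0$. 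In the fluid domain $s-g(-s)\to+\infty$ as $s\to\infty$ and to a strictly negative value (possibly $-\infty$) as $s$ approaches the left endpoint of $(-\theta^*,\infty)$; by continuity there is a unique zero $p^*>-\theta^*$ solving $g(-p^*)=p^*$, and $\hat\Xi(s)<\infty$ iff $s>p^*$. In the close-packing and transition domains, $s-g(-s)\geq -\theta^*-g(\theta^*)\geq 0$ for all $s\geq -\theta^*$ whereas $g(-s)=+\infty$ for $s<-\theta^*$; hence the abscissa of convergence of $\hat\Xi$ is exactly $-\theta^*$.

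Denote by $P$ the threshold so identified ($P=p^*$ in case~(a), $P=-\theta^*$ in case~(b) and the transition case). Since $\Xi_L$ is non-decreasing, whenever $s_0>P$ with $\hat\Xi(s_0)<\infty$ one has $\hat\Xi(s_0)\geq\Xi_{L_0}e^{-s_0 L_0}/s_0$, giving $\Xi_{L_0}\leq s_0 e^{s_0 L_0}\hat\Xi(s_0)$ and hence $p(\vect{z})\leq s_0$; letting $s_0\downarrow P$ yields $p(\vect{z})\leq P$. Conversely, if $p(\vect{z})<P$, pick $s$ with $p(\vect{z})<s<P$ so that $\Xi_L\leq Ce^{sL}$ eventually, and then $\int e^{-s'L}\Xi_L\,\dd L<\infty$ for every $s'\in(s,P)$, contradicting the divergence of $\hat\Xi(s')$. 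Thus $p(\vect{z})=P$, which is (a) and (b).

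The main obstacle I expect is the close-packing case: there the Laplace transform remains \emph{finite} at the boundary $s=-\theta^*$, unlike the genuine pole in the fluid domain, yet the pressure still equals $-\theta^*$ because $\hat\Xi$ cannot be extended past $s=-\theta^*$ owing to the divergence of $g(-s)$ itself. One must separate the two mechanisms by which $\hat\Xi$ fails to converge (the pole of the fluid fixed-point equation versus the Dirichlet-type singularity of $g$), and handle the boundary behaviour of $g$ at $\theta^*$---finiteness of $g(\theta^*)$, monotone left-continuity---uniformly across all three regimes.
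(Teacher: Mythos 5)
Your proof is correct and follows essentially the same route as the paper: compute the Laplace transform $\hat\Xi(s)$ of $\Xi_L$ in $L$, establish the identity $\hat\Xi(s)=1/(s-g(-s))$, and read off $p(\vect{z})$ as the abscissa of convergence. The one real difference is how you arrive at that identity — you compute $\hat\Xi$ directly from the explicit partition-function formula via Tonelli, whereas the paper derives it from the renewal equation $\Xi_L = 1 + \sum_k z_k \int_{\ell_k}^L \Xi_{L-x}\,\dd x$; you also spell out, via Fekete subadditivity and the two-sided comparison at the end, the step identifying the exponential growth rate with the Laplace abscissa, which the paper treats as known. These are welcome elaborations but do not change the underlying argument.
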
 
The theorem is proven in Section~\ref{sec:fixed-point}, an explanation of the fixed point equation is given in Section~\ref{sec:explanations}. Theorem~\ref{thm:pressure} suggests the possibility of phase transitions, illustrated by the following example. 

\begin{example}
	Take rod lengths $\ell_k = k$ and parameter-dependent activities $z_k(\mu) =\exp(k \mu - \sqrt{k})$, $\mu\in \R$. The activities are associated with a parameter-dependent Dirichlet series $ \sum_k \exp[(\theta +\mu)k - \sqrt{k}]$. The abscissa of convergence is $\theta^*= - \mu$  and at $-\theta^*$ the Dirichlet series takes the $\mu$-independent value $\sum_k \exp( - \sqrt{k})$. In the fluid regime  $\mu< \sum_k \exp( - \sqrt{k})$ the pressure $p(\mu)$ is the unique solution to $p = \sum_k \exp( k \mu - k p  - \sqrt{k})$, and an implicit function theorem shows that $p(\mu)$ is analytic with 
\begin{equation*} 
  \frac{\dd p}{\dd \mu} = \frac{\sum_k k \exp( k [\mu - p] - \sqrt{k})}{1+\sum_k k \exp( k [\mu - p] - \sqrt{k})} \leq \frac{\sum_k k\exp( -\sqrt{k})}{1+ \sum_k k\exp( -\sqrt{k})} <1.
\end{equation*} 
In the close-packing regime $\mu>\sum_k \exp( - \sqrt{k})$, the pressure is $p(\mu)=\mu$, and we recognize a first-order phase transition at $\mu = \sum_k \exp(  - \sqrt{k})$. 
\end{example} 

\begin{remark} 
  The previous example is easily extended.
  Let $(a_k)_{k\in \N}$ be non-negative weights such that $h(z)=\sum_k a_k z^k$ has positive radius of convergence $R>0$. Set $z_k(\mu):= a_k \exp(k\mu)$. Then there is a phase transition in $\mu$ if and only if $R$ is finite and $\sum_k a_k R^k<\infty$. The transition takes place at $\exp(\mu) = R$. If $\sum_k ka_k R^k<\infty$, it is of first order. The situation is closely related to condensation in the zero-range process~\cite{evans-hanney05}, with the important difference, however, that we can take $\exp(\mu)>R$ and go beyond the coexistence region, into the close-packed phase. The phenomenon is also similar to phase transitions studied in Fisher-Felderhof clusters~\cite{fisher-felderhof70,fisher72}. 
\end{remark}

A very heuristic explanation of  the phase transition is the following. 
Suppose that $\ell_k = k$ and the abscissa of convergence $\theta^*=- \limsup_{k\to \infty} \frac{1}{k} \log z_k$ is finite. Then $\exp( - L \theta^*)$ is the weight of the configuration where space is filled with one long rod. In order to see how much we loose by breaking space into smaller rods, it seems natural to rescale the activities as $z_k \to z_k \exp( k \theta^*)$. If we neglect excluded volumes, the partition function for a system of many rods  for the rescaled activities becomes $\exp( L\sum_{k=1}^\infty z_k \exp( k \theta^*)) = \exp( L g(\theta^*))$.
When $g(\theta^*)<-\theta^*$, it seems more advantageous to fill space with one long rod, and so we recover the case distinction from Theorem~\ref{thm:pressure}, though the use of rescaled activities in this argument is somewhat ad hoc. 

The next theorem shows that the fluid and close-packing domains do indeed correspond to different behaviors of the system. The relevant order parameter is the \emph{packing fraction}, the fraction of volume covered by rods. The convergence $\sum_k \ell_k N_k/L\to \sigma$ stands for convergence in probability in the grand-canonical ensemble, i.e., 
 for all $\eps>0$, the grand-canonical probability that $|\sum_k N_k \ell_k /L- \sigma |\geq \eps$ goes to $0$ as $L\to \infty$. 

\begin{theorem} \label{thm:pack-frac} 
  Let $(z_k)_{k\in \N}$ be non-negative activities satisfying the stability assumption~\eqref{eq:stability}. Then as $L\to \infty$ the packing fraction behaves as follows:
	\begin{enumerate} 
		\item[(a)] In the fluid regime 
			\begin{equation*} 
				\frac{1}{L} \sum_k \ell_k N_k \to \sigma(\vect{z})<1,\quad 
                                \sigma(\vect{z}) = \frac{\sum_k \ell_k z_k \exp[-\ell_k p(\vect{z})]}{1+ \sum_k \ell_k z_k \exp[-\ell_k p(\vect{z})]}.
		\end{equation*} 
		\item[(b)] In the close-packing regime, the packing fraction converges to $1$. 
	\end{enumerate} 
\end{theorem}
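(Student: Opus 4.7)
The plan is to identify the packing fraction $X_L := \frac{1}{L}\sum_k \ell_k N_k$ as the $t=0$ derivative of a cumulant generating function and to exploit convexity for both the derivative computation and the concentration. Tilting by $t \in \R$, set $\vect{z}(t):= (z_k e^{t\ell_k})_{k\in\N}$; definition~\eqref{eq:partfct-cont} directly yields
\begin{equation*}
\E\bigl[\exp(tL X_L)\bigr] = \frac{\Xi_L(\vect{z}(t))}{\Xi_L(\vect{z})},
\end{equation*}
so $\Lambda_L(t) := \frac{1}{L}\log \E[\exp(tL X_L)]$ is convex in $t$ as a scaled log-moment generating function of a non-negative measure.

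First I check that $\vect{z}(t)$ stays in the same regime as $\vect{z}$ for $t \in (-\delta,\delta)$ with some $\delta>0$. The tilted Dirichlet series and abscissa are $g_t(\theta)=g(\theta+t)$ and $\theta^*(t)=\theta^*-t$, so $g_t(\theta^*(t))+\theta^*(t) = g(\theta^*)+\theta^*-t$ keeps its strict sign for small $|t|$ in both regimes (the case $\theta^*=\infty$ stays fluid trivially). Theorem~\ref{thm:pressure} then applies to $\vect{z}(t)$ and gives the pointwise convergence $\Lambda_L(t)\to \psi(t) := p(\vect{z}(t))-p(\vect{z})$ on $(-\delta,\delta)$.

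Next I compute $\psi'(0)$. In the fluid regime $p(t):= p(\vect{z}(t))$ is the unique solution with $p(t)>-\theta^*(t)$ of $p(t) = g(t-p(t))$; since $-p(0)<\theta^*$ the function $g$ is analytic near $-p(0)$, and the implicit function theorem gives
\begin{equation*}
\psi'(0) = p'(0) = \frac{g'(-p(0))}{1+g'(-p(0))} = \frac{\sum_k \ell_k z_k e^{-\ell_k p(\vect{z})}}{1+\sum_k \ell_k z_k e^{-\ell_k p(\vect{z})}} = \sigma(\vect{z}) < 1.
\end{equation*}
In the close-packing regime Theorem~\ref{thm:pressure}(b) gives $p(\vect{z}(t))=-\theta^*(t)=-\theta^*+t$ directly, so $\psi'(0)=1$.

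Finally, I convert this to convergence in probability via Chernoff: for any $\eps>0$ and $t\in(0,\delta)$,
\begin{equation*}
\P\bigl(X_L\geq \psi'(0)+\eps\bigr)\leq \exp\bigl(-L[t(\psi'(0)+\eps)-\Lambda_L(t)]\bigr),
\end{equation*}
and symmetrically for the lower deviation with $t<0$ (using $X_L\leq 1$ in the close-packing case). Since $\Lambda_L(t)\to \psi(t) = t\psi'(0)+o(t)$ as $L\to\infty$ and then $t\to 0$, the bracket tends to $t\eps-o(t)>0$ for $t$ small of the appropriate sign, so the deviation probability decays exponentially in $L$. The main technical point is the differentiability of $\psi$ at $0$; it is immediate in the close-packing regime and follows in the fluid regime from the implicit function theorem applied to the fixed-point equation of Theorem~\ref{thm:pressure}(a) on the open neighborhood where $\vect{z}(t)$ remains fluid.
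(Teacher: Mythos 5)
Your proof is correct, and it takes a genuinely different route from the paper. You run a G\"artner--Ellis-style argument: exponential tilting $z_k \mapsto z_k e^{t\ell_k}$, the identity $\E[e^{tLX_L}]=\Xi_L(\vect{z}(t))/\Xi_L(\vect{z})$, convergence of the scaled cumulant generating function to $\psi(t)=p(\vect{z}(t))-p(\vect{z})$ via Theorem~\ref{thm:pressure} (after checking the tilt preserves the regime, which your computation $g_t(\theta^*(t))+\theta^*(t)=g(\theta^*)+\theta^*-t$ does correctly), differentiability of $\psi$ at $0$ by the implicit function theorem on the fixed-point equation, and a Chernoff bound. The one point worth making explicit is that the IFT branch through $(0,p(\vect{z}))$ coincides with $p(\vect{z}(t))$ for small $|t|$; this follows because in the fluid regime $p(\vect{z})>-\theta^*$ strictly, so by continuity the branch stays above $-\theta^*(t)$ and uniqueness in Theorem~\ref{thm:pressure}(a) identifies it. The paper instead constructs the explicit Poisson representation~\eqref{eq:poisson} of the canonical partition function, proves a full large-deviation upper bound with rate $I(\sigma)+p(\vect{z})$ (Lemma~\ref{lem:ldp}), and solves the variational problem $\min_\sigma I(\sigma)=-p(\vect{z})$ (Lemma~\ref{lem:pvar}). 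Your approach is shorter and softer, treating Theorem~\ref{thm:pressure} as a black box for the tilted activities; the paper's approach buys the explicit rate function and the variational characterization of the pressure, which is what allows the remark following the theorem to extract partial information in the transition regime --- precisely the case where your $\psi$ may fail to be differentiable at $0$ (the left and right derivatives then bracket the coexistence interval $[\sigma^*,1]$). Since the theorem only concerns the fluid and close-packing regimes, your argument suffices as stated.
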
 
The theorem is proven in Section~\ref{sec:fixed-point} by a large deviations approach.

\begin{remark} 
  In the transition regime $g(\theta^*) = -\theta^*$ there are two possible scenarios: if as $\theta$ approaches the abscissa of convergence the derivative $g'(\theta) = \sum_k \ell_k \exp( \theta \ell_k)$ diverges, the packing fraction converges to $1$. If the derivative stays bounded, let $\sigma^*:=\lim_{\theta \nearrow \theta^*} g'(\theta)/[1+ g'(\theta)]$.  Then $\sigma^*\in (0,1)$ and Lemmas~\ref{lem:pvar} and~\ref{lem:ldp} below only show that the grand-canonical probability of seeing a packing fraction smaller than $\sigma^*-\eps$ goes to zero, for every $\eps>0$. Intuitively, this corresponds to a coexistence region between a densely packed phase ($\sigma = 1$) and a fluid  that saturates at $\sigma = \sigma^*$. 
\end{remark}

Now we give the domain of convergence of the cluster expansion, as discussed in the introduction, and an explicit formula for the expansion. The formula generalizes a well-known relationship between a tree generating function and the pressure for non-overlapping rods of length $1$, see~\cite{brydges-imbrie} and the references therein.

\begin{theorem} \label{thm:cluster-convergence} 
  Let $(z_k)_{k\in \N}$ be non-negative activities satisfying the stability assumption~\eqref{eq:stability}. The following holds:  
	\begin{enumerate} 
		\item[(a)] If $\sum_k z_k \exp(a \ell_k) \leq a$ for some $a>0$, then the pressure defined by Eq.~\eqref{eq:pressure-def} is given by
                   \begin{equation} \label{eq:tonks} 
	           p(\vect{z}) = \sum_{\vect{n} \in \mathcal{I}^*} \frac{\vect{z} ^{\vect{n}} }{\vect{n}!}
(- \sum_k n_k\ell_k )^{\sum_k n_k -1} 
	\end{equation}
            and the sum is absolutely convergent. 
	\item[(b)] If $\sum_k z_k \exp(a \ell_k) >a$ for all $a>0$, then
                   \begin{equation*}
                       \sum_{\vect{n} \in \mathcal{I}^*}\Bigl| \frac{\vect{z} ^{\vect{n}} }{\vect{n}!}
(- \sum_k n_k\ell_k )^{\sum_k n_k -1} \Bigr| =\infty. 
                   \end{equation*}
	\end{enumerate}  
\end{theorem}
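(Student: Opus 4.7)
The plan is to identify the series in \eqref{eq:tonks} with a Lagrange inversion expansion associated with the pressure fixed-point equation $p = \sum_k z_k e^{-\ell_k p}$ of Theorem~\ref{thm:pressure}.

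First, I would regroup the sum by the total rod number $N = \sum_k n_k$. Applying the elementary identity $x^{N-1} = (N-1)!\,[w^{N-1}]e^{xw}$ together with the multinomial theorem yields, for every $N \geq 1$,
\begin{equation*}
\sum_{\vect{n}:\,\sum_k n_k = N}\frac{\vect{z}^{\vect{n}}}{\vect{n}!}\Bigl(\sum_k n_k\ell_k\Bigr)^{N-1} = \frac{1}{N}[w^{N-1}]g(w)^N =: y_N \geq 0,
\end{equation*}
where $g(w) = \sum_k z_k e^{\ell_k w}$. By the Lagrange inversion formula, these $y_N$ are precisely the Taylor coefficients at the origin of the function $y(\zeta)$ implicitly defined by $y = \zeta g(y)$. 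Consequently the absolute-value version of \eqref{eq:tonks} equals $\sum_{N}y_N = y(1)$, while the signed series equals $\sum_{N}(-1)^{N-1}y_N = -y(-1)$.

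For part (a), the hypothesis $g(a) \leq a$ combined with the continuity, monotonicity and convexity of $g$ on $[0,\theta^*)$ and with $g(0) \geq 0$ produces a smallest positive fixed point $y_0 \in (0,a]$ of $g$. For $\zeta \in [0,1]$ the equation $y = \zeta g(y)$ then admits a smallest positive solution $y(\zeta)$ that varies analytically with $\zeta$ and increases from $0$ to $y_0$, so Abel's theorem applied to the non-negative series $\sum y_N \zeta^N$ gives $\sum_N y_N = y(1) = y_0 \leq a$, proving absolute convergence of \eqref{eq:tonks}. Consequently $y(-1)$ is a well-defined absolutely convergent alternating sum, and passing to the limit $\zeta \to -1$ in the functional equation yields $-y(-1) = g(y(-1))$; hence $P := -y(-1) \geq 0$ solves $P = g(-P)$. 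Since $\theta^* \geq a > 0$ puts $\vect{z}$ in the fluid domain, Theorem~\ref{thm:pressure} identifies $P$ with the pressure $p(\vect{z})$.

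For part (b), assume $g(a) > a$ for every $a > 0$, so that $g$ has no positive fixed point in $[0,\theta^*)$. As $\zeta$ increases from $0$, the branch $y(\zeta)$ of $y = \zeta g(y)$ must terminate at some $\zeta_c \leq 1$, either because $y(\zeta) \nearrow \theta^*$ or because a tangency $\zeta g'(y(\zeta)) = 1$ develops; in either scenario the radius of convergence of $\sum_N y_N \zeta^N$ is at most $1$, and non-negativity of the coefficients forces $\sum_N y_N = \infty$, yielding the claimed divergence. The main delicate step is the analysis of the Lagrange branch $y(\zeta)$ when $g$ itself is only finite on a half-line $(-\infty,\theta^*)$, especially in part (b) where $\theta^* \leq 0$ is possible and $g(w)$ is only a formal power series in $w$; this is naturally handled by truncating to finitely many species, for which $g$ is entire and the Lagrange identity is classical, and then passing to a monotone limit that respects non-negativity.
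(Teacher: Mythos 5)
Your reduction of the multi-index sum to a single-variable Lagrange series is correct and is a genuinely different (and tidier) route than the paper's for the convergent part. The identity $\sum_{\vect{n}:\sum_k n_k=N}\frac{\vect{z}^{\vect{n}}}{\vect{n}!}(\sum_k n_k\ell_k)^{N-1}=\frac1N[w^{N-1}]g(w)^N$ checks out, and your $y(\zeta)=\sum_N y_N\zeta^N$ is exactly the paper's $F(t\vect{z})$ with $t=\zeta$; the paper instead proves convergence under the strict inequality $g(a)<a$ by a Poisson-random-variable representation of the coefficients and only then invokes Abel's theorem for the boundary case, whereas you get both cases at once from the bound $y(\zeta)\le y_0\le a$ for $\zeta<1$ plus monotone convergence. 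You do owe the reader two small verifications that the paper also supplies in its continuity argument: that the power series $\sum_N y_N\zeta^N$ has radius of convergence at least $1$ (immediate from $y_N\le\frac1N\,g(a)^N/a^{N-1}\le a/N$, using non-negativity of the Taylor coefficients of $g$), and that on $(0,1)$ it coincides with the \emph{smallest} positive root of $y=\zeta g(y)$ rather than the second crossing. The identification of $-y(-1)$ with $p(\vect{z})$ via Theorem~\ref{thm:pressure} is fine: $\theta^*\ge a>0$ and $g\ge0$ put $\vect{z}$ in the fluid domain, and $P=-y(-1)\ge0>-\theta^*$ solves $P=g(-P)$.

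Part (b), however, has a genuine gap in the last step. From ``the branch terminates at some $\zeta_c\le1$'' you conclude that the radius of convergence is at most $1$ and that ``non-negativity of the coefficients forces $\sum_N y_N=\infty$.'' That implication is false: a power series with non-negative coefficients and radius of convergence exactly $1$ (e.g.\ $\sum\zeta^N/N^2$) can converge at $\zeta=1$, and your dichotomy only rules out $\zeta_c>1$, not $\zeta_c=1$. The correct deduction — which is precisely the paper's one-line proof of (b) — goes through the functional equation at $\zeta=1$ rather than through the radius of convergence: if $S:=\sum_N y_N<\infty$, then letting $\zeta\nearrow1$ in $y(\zeta)=\zeta g(y(\zeta))$ and using monotone convergence (all Taylor coefficients of $g$ are non-negative, and $y(\zeta)\nearrow S$) gives $S=g(S)$ with $S>0$, contradicting the hypothesis that $g(a)>a$ for every $a>0$; the case $g(S)=\infty$ is likewise impossible since then the right-hand side diverges while the left stays at $S$. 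This argument also absorbs your truncation step for $\theta^*\le0$: the truncated sums $S_K$ increase to $S$, and a finite limit would again produce a fixed point of $g$. Replace the radius-of-convergence claim by this fixed-point contradiction and part (b) is complete.
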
 
The formula~\eqref{eq:tonks} is proven in Section~\ref{sec:trees}, where we also provide a combinatorial interpretation in terms of tree generating functions.  The convergence is addressed in Section~\ref{sec:convergence}. We should stress  that (a) refers only to the convergence of the expansion of the pressure; for the convergence of the expansions of the densities $\rho_k(\vect{z})$, the inequality~\eqref{eq:continuous} in general has to be strict for some $a>0$.

\begin{remark} 
  For negative (or complex) activities  with $\sum_k |z_k|\exp( a \ell_k) \leq a$ for some $a>0$, the sum~\eqref{eq:tonks} is absolutely convergent as well and we adopt Eq.~\eqref{eq:tonks} as a definition of the pressure. In Section~\ref{sec:convergence} we will see that $F(\vect{z}) = - p( - \vect{z})$ satisfies the flipped fixed point equation $F= \sum_k z_k \exp( \ell_k F)$; in fact the convergence condition~\eqref{eq:continuous} is equivalent to the existence of a solution to the flipped equation. This is interesting because $F$ is the pressure for a three-dimensional multi-type branched polymer~\cite{brydges-imbrie03b}. 
\end{remark} 

Finally we examine the virial expansion, i.e., the expansion in terms of the densities 
\begin{equation*} 
 \rho_k(\vect{z}):= z_k\frac{\partial p}{\partial z_k} (\vect{z}).
\end{equation*}

\begin{theorem}\label{thm:vir}
  Let $\vect{z}=(z_k)_{k\in \N}$ be non-negative activities in the fluid domain. Then the partial derivatives $\frac{\partial p}{\partial z_k}(\vect{z})$, $k\in \N$, at $\vect{z}$ exist and we have 
	\begin{equation} \label{eq:vir}
		p(\vect{z})  = \frac{\sum_k \rho_k (\vect{z})}{1- \sum_k \ell_k \rho_k (\vect{z})},  \quad 
		z_k = \frac{\rho_k(\vect{z})\exp( \ell_k p(\vect{z}))}{1- \sum_j \ell_j \rho_j(\vect{z})}
	\end{equation} 
	with $  \sum_k \ell_k \rho_k(\vect{z}) ) = \sigma(\vect{z}) < 1$  as in Theorem~\ref{thm:pack-frac}(a). 
\end{theorem}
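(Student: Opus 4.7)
The plan is to differentiate the fixed point equation from Theorem~\ref{thm:pressure}(a) and then algebraically manipulate the resulting identities to obtain \eqref{eq:vir}. In the fluid domain, $p=p(\vect z)$ is the unique solution with $p>-\theta^*$ of
\begin{equation*}
    F(\vect z,p) := \sum_k z_k e^{-\ell_k p} - p = 0.
\end{equation*}
Since $p>-\theta^*$, i.e.\ $-p$ lies strictly inside the convergence interval of the Dirichlet series $g(\theta)=\sum_k z_k e^{\theta\ell_k}$, the series $S:=\sum_k \ell_k z_k e^{-\ell_k p}=g'(-p)$ converges (termwise differentiation is justified by monotone convergence on the interior of the convergence domain). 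Consequently $\partial_p F(\vect z,p)=-(1+S)$ is finite and strictly negative.

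First I would apply the implicit function theorem, one coordinate at a time, to the map $(z_k,p)\mapsto F(\vect z,p)$ viewed with all other $z_j$ fixed. This yields existence of $\partial p/\partial z_k$ together with the explicit formula
\begin{equation*}
    \frac{\partial p}{\partial z_k}(\vect z) = \frac{e^{-\ell_k p}}{1+S},
    \qquad
    \rho_k(\vect z) = z_k\frac{\partial p}{\partial z_k}(\vect z) = \frac{z_k e^{-\ell_k p}}{1+S}.
\end{equation*}
(A small check: $F$ is jointly continuously differentiable in the relevant neighbourhood because the two series defining $F$ and $\partial_p F$ are locally uniformly convergent in $(\vect z,p)$ with $-p<\theta^*$, so nothing more delicate than the classical implicit function theorem is required.)

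Next I would sum: $\sum_k\ell_k\rho_k = S/(1+S)$, which by Theorem~\ref{thm:pack-frac}(a) equals $\sigma(\vect z)<1$. In particular $1-\sum_j\ell_j\rho_j = 1/(1+S)>0$. Solving the formula for $\rho_k$ for $z_k$ gives
\begin{equation*}
    z_k = \rho_k(1+S)\,e^{\ell_k p} = \frac{\rho_k(\vect z)\,e^{\ell_k p(\vect z)}}{1-\sum_j\ell_j\rho_j(\vect z)},
\end{equation*}
which is the second identity of \eqref{eq:vir}. Finally, the fixed point equation $p=\sum_k z_k e^{-\ell_k p}$ divided by $1+S$ yields $\sum_k\rho_k = p/(1+S)=p\bigl(1-\sum_j\ell_j\rho_j\bigr)$, so
\begin{equation*}
    p(\vect z) = \frac{\sum_k \rho_k(\vect z)}{1-\sum_j\ell_j\rho_j(\vect z)},
\end{equation*}
which is the first identity of \eqref{eq:vir}.

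The only non-routine point is checking the convergence of $S$ and the legitimacy of termwise differentiation of the fixed point equation; this is the one step where the hypothesis of being strictly in the fluid domain ($p>-\theta^*$) is essential, since on the boundary $S$ may diverge (cf.\ the remark following Theorem~\ref{thm:pack-frac}). Everything else is algebraic manipulation combined with an appeal to Theorems~\ref{thm:pressure} and~\ref{thm:pack-frac}.
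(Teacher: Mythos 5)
Your proposal is correct and follows essentially the same route as the paper: apply the implicit function theorem coordinatewise to the fixed point equation $p-\sum_k z_k e^{-\ell_k p}=0$, obtain $\rho_j = z_j e^{-\ell_j p}/(1+\sum_k \ell_k z_k e^{-\ell_k p})$, sum against $\ell_j$ and against $1$, and invert. The only point the paper makes explicit that you leave implicit is that the fluid-domain condition $g(\theta^*)>-\theta^*$ persists under small perturbations of a single $z_j$ (by continuity), which is what guarantees that the locally defined implicit solution really is the pressure of the perturbed system.
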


Theorem~\ref{thm:vir} is proven in Section~\ref{sec:virial} by applying an implicit function theorem  to the fixed point equation~\eqref{eq:pressure-equation}. This approach also shows that the pressure is analytic in all of $\mathcal{D}_{\rm fluid}$. 

\begin{corollary} \label{cor:analyticity}
  Fix non-negative activities $\vect{z}^0 =(z_k^0)_{k\in \N}\in \mathcal{D}_{\rm fluid}$. Then for every $k\in \N$, the map $z_k \mapsto p(z_1^0,\ldots, z_{k-1}^0,z_k,z_k^0,\ldots)$ is analytic in some neighborhood of $z_k^0$. 
\end{corollary}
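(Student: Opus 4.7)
My plan is to apply the holomorphic implicit function theorem to the fixed point equation \eqref{eq:pressure-equation} at $\vect{z}^0$. Fix $k\in\N$, set $p^0:=p(\vect{z}^0)$, and write $\theta^*=\theta^*(\vect{z}^0)$. By Theorem~\ref{thm:pressure}(a) and the fluid assumption, $p^0>-\theta^*$ and $p^0=\sum_{j}z_j^0\exp(-\ell_j p^0)$. For complex $w$ near $z_k^0$ and complex $p$ with $\mathrm{Re}\,p>-\theta^*$, define
\begin{equation*}
F(w,p):=p-w\exp(-\ell_k p)-\sum_{j\neq k}z_j^0\exp(-\ell_j p).
\end{equation*}
Since the Dirichlet series $g(\theta)=\sum_j z_j^0\exp(\theta\ell_j)$ converges for all $\theta<\theta^*$, the sum above converges locally uniformly in $(w,p)$ on the prescribed domain, so $F$ is jointly holomorphic there, and $F(z_k^0,p^0)=0$.

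Next I would differentiate termwise to obtain
\begin{equation*}
\partial_p F(w,p)=1+\ell_k w\exp(-\ell_k p)+\sum_{j\neq k}\ell_j z_j^0\exp(-\ell_j p).
\end{equation*}
The termwise differentiation is legitimate because, picking any $p'\in(-\theta^*,p^0)$, the factor $\ell_j\exp(-\ell_j(p^0-p'))$ is bounded uniformly in $j$, so the differentiated series is dominated by a constant multiple of $g(-p')<\infty$. At the base point this gives $\partial_p F(z_k^0,p^0)=1+\sum_{j}\ell_j z_j^0\exp(-\ell_j p^0)\in[1,\infty)$, which is nonzero. The holomorphic implicit function theorem then yields an open neighborhood $U\subset\C$ of $z_k^0$ and a holomorphic function $\pi:U\to\C$ with $\pi(z_k^0)=p^0$ and $F(w,\pi(w))\equiv 0$ on $U$.

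It remains to identify $\pi(z_k)$ with $p(z_1^0,\ldots,z_{k-1}^0,z_k,z_{k+1}^0,\ldots)$ for real $z_k\geq 0$ in $U$. Perturbing a single activity $z_k^0$ to a nearby non-negative $z_k$ does not alter $\theta^*$, and $g(\theta^*)$ (when $\theta^*<\infty$) depends continuously on $z_k$, so the strict inequality $g(\theta^*)>-\theta^*$ (or $\theta^*=\infty$) is preserved and the perturbed activity vector remains in $\mathcal{D}_{\rm fluid}$. By Theorem~\ref{thm:pressure}(a) the pressure is then the unique real solution $p>-\theta^*$ of $F(z_k,p)=0$; since $\pi(z_k)$ is real for real $w=z_k$ (by uniqueness in the implicit function theorem, using that $\overline{\pi(\bar w)}$ solves the same equation) and lies near $p^0>-\theta^*$ by continuity, it must coincide with this pressure, proving analyticity. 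The main subtlety I foresee is precisely this matching step, i.e.\ confirming that the branch picked out by the implicit function theorem is the physical pressure and not some spurious root; this is resolved by combining openness of the fluid domain under single-coordinate perturbations with the uniqueness statement in Theorem~\ref{thm:pressure}(a).
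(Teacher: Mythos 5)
Your proposal is correct and takes essentially the same approach as the paper: apply the holomorphic implicit function theorem to the fixed point equation at $(\vect{z}^0,p^0)$, check that $\partial_p F\geq 1$, and then identify the implicit-function-theorem branch with the physical pressure by observing that single-coordinate perturbations leave $\theta^*$ unchanged, keep the activity vector in $\mathcal{D}_{\rm fluid}$, and that Theorem~\ref{thm:pressure}(a) then pins down the unique solution $p>-\theta^*$. The paper works with the variable $\theta=-p$ and only writes out the case $k=1$, while you handle general $k$ and spell out the reality/branch-matching step more explicitly, but these are cosmetic differences.
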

A similar statement holds for suitable parameter-dependent activities. In particular, the parametrization  $z_k(t) = t z_k$ allows us to connect vanishing activities to every $\vect{z} \in \mathcal{D}_{\rm fluid}$ in such a way that the pressure is an analytic function of the parameter $t$.  

\begin{proof} 
  For simplicity we consider $k=1$, the other cases are similar. Let $\theta^*$ be the abscissa of convergence of $\sum_k z_k^0 \exp( \ell_k \theta)$ and $f(z_1,\theta):= \theta + z_1 \exp(\theta \ell_1) + \sum_{k \geq 2} z_k^0 \exp(\theta \ell_k )$. Then $f$ is a holomorphic function of  $(z_1,\theta)$ in $\C\times \{\theta \mid {\rm Re}\, \theta <\theta^*\}$. We have $f(z_1^0,- p(\vect{z}^0)) = 0$ and 
\begin{equation*} 
  \frac{\partial f}{\partial \theta}\bigl(z_1^0,- p(\vect{z}^0)\bigr) = 1 + \sum_{k\geq 1} \ell_k z_k^0 e^{- \ell_k p(\vect{z}^0)} = \frac{1}{1 - \sigma(\vect{z}^0)} \neq 0.
\end{equation*} 
The holomorphic implicit function theorem~\cite[Chapter 7]{fritzsche-grauert-book} (see also \cite{sokal09} for a power series approach) guarantees the existence of a holomorphic function $\theta(z_1)$ defined in some open complex neighborhood of $0$  such that $\theta(z_1^0) = - p(\vect{z}^0)$ and $f(\theta(z_1),z_1) =0$. When $z_1$ is real and close enough to $z_1^0$, then $\vect{z}=(z_1,z_2^0,z_3^0,\ldots)$ must be in $\mathcal{D}_{\rm fluid}$. Indeed, changing only one coefficient $z_1$ does not change the abscissa of convergence $\theta^*$, and the inequality $\sum_k z_k^0 \exp( \ell_k \theta^*) > - \theta^*$ holds by continuity when $z_1^0$ is replaced by $z_1$ sufficiently close to $\theta^*$. It follows that $p(\vect{z})$ solves the fixed point equation and therefore $z_1\mapsto p(z_1,z_2^0,z_3^0\ldots) = - \theta(z_1)$ is analytic.
\end{proof}

It follows that the domain of convergence of the activity expansion is in general smaller than the domain of analyticity $\mathcal{D}_{\rm fluid}$. In contrast, the pressure-density expansion converges absolutely when $\sum_k \ell_k \rho_k(\vect{z}) <1$, which is the case in all of $\mathcal{D}_{\rm fluid}$. 
As mentioned in the introduction, the difference between the domains is quite drastic: when $\ell_k =k$, the fluid domain and the activities for which the virial expansion converges can include diverging activities $z_k \to \infty$, while the convergence of the activity expansion requires exponentially decreasing activities $z_k = O(\exp( - a k))$. Precisely, let 
\begin{align*}
\mathcal{D}_{\mathrm{May}} &:=\{\vect{z} \in \R_+^\N \mid\exists a>0:\, \sum_k z_k\exp(ka ) \leq a\}\\
 \mathcal{D}_\mathrm{vir} &:= 
\{\vect{z} \in \R_+^\N \mid \vect{z} \in \mathcal{D}_{\rm fluid},\ \rho_k(\vect{z})\text{ exists for all $k$, and }\sum_k k \rho_k(\vect{z})<1 \} 
\end{align*} 
be the domains of convergence for the activity and the density expansions when $\ell_k = k$. 

\begin{corollary} \label{cor:domains}
	Assume $\ell_k = k$ for all $k \in \N$. Then 
        \begin{equation*} 
          \mathcal{D}_{\rm May} \subsetneqq \mathcal{D}_{\rm vir} = \mathcal{D}_{\rm fluid}. 
         \end{equation*} 
\end{corollary}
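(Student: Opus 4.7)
The plan is to decompose the corollary into three pieces---$\mathcal{D}_{\rm vir}=\mathcal{D}_{\rm fluid}$, $\mathcal{D}_{\rm May}\subseteq\mathcal{D}_{\rm fluid}$, and strictness of the last inclusion---all of which are essentially immediate bookkeeping on top of Theorem~\ref{thm:vir} and Definition~\ref{def:regimes}. The inclusion $\mathcal{D}_{\rm vir}\subseteq\mathcal{D}_{\rm fluid}$ is built into the definition of $\mathcal{D}_{\rm vir}$; for the reverse inclusion, Theorem~\ref{thm:vir} already asserts that on $\mathcal{D}_{\rm fluid}$ the partial derivatives $\partial p/\partial z_k$, hence the densities $\rho_k(\vect{z})$, exist and $\sum_k \ell_k \rho_k(\vect{z}) = \sigma(\vect{z}) < 1$. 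Specialising to $\ell_k = k$ gives exactly the defining property of $\mathcal{D}_{\rm vir}$.

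For $\mathcal{D}_{\rm May}\subseteq\mathcal{D}_{\rm fluid}$, I would argue directly from the abscissa of convergence. If $\sum_k z_k e^{ak}\leq a$ for some $a>0$, then $g(\theta)=\sum_k z_k e^{k\theta}$ converges at $\theta=a$, so $\theta^*\geq a>0$; either $\theta^*=\infty$, or $g(\theta^*)\geq 0>-\theta^*$ because $\theta^*$ is strictly positive. In both cases Definition~\ref{def:regimes}(a) is satisfied, and the stability condition~\eqref{eq:stability} is automatic. The only potential snag is ruling out the transition case $g(\theta^*)=-\theta^*$, but this is trivial since $-\theta^*<0\leq g(\theta^*)$.

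For strict inclusion I would exhibit a single-species witness: take $z_1=z>1/e$ and $z_k=0$ for $k\geq 2$. Then $\theta^*=\infty$ and stability is obvious, so $\vect{z}\in\mathcal{D}_{\rm fluid}=\mathcal{D}_{\rm vir}$. On the other hand, membership in $\mathcal{D}_{\rm May}$ would demand $ze^a\leq a$ for some $a>0$, and the function $a\mapsto ze^a-a$ attains its minimum $1+\log z$ at $a=-\log z$, so such an $a$ exists iff $z\leq 1/e$. For $z>1/e$ the witness therefore lies in $\mathcal{D}_{\rm fluid}\setminus\mathcal{D}_{\rm May}$. This example is simultaneously the classical Tonks-gas/Lambert $W$ radius of convergence $1/e$, so it is also the sharpest possible witness.

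There is no genuine obstacle in the argument: every step reduces to a one-line verification once one unfolds the definitions and invokes Theorem~\ref{thm:vir}. If anything, the only care required is in the middle step, to confirm that the nonstrict inequality in~\eqref{eq:continuous} at $a$ does not drop the witness onto the transition boundary of $\mathcal{D}_{\rm fluid}$, and in the last step, to check that the witness simultaneously sits inside $\mathcal{D}_{\rm fluid}$ and outside $\mathcal{D}_{\rm May}$.
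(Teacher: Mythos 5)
Your proposal is correct, and it is cleaner than the paper's own argument in one respect: you explicitly verify the inclusion $\mathcal{D}_{\rm May}\subseteq\mathcal{D}_{\rm fluid}$ by noting that $\sum_k z_k e^{ak}\leq a$ forces $\theta^*\geq a>0$, whence $g(\theta^*)\geq 0>-\theta^*$; the paper takes this half of the strict inclusion for granted and only exhibits an element of $\mathcal{D}_{\rm vir}\setminus\mathcal{D}_{\rm May}$. Where you genuinely diverge from the paper is in the choice of witness. You use the single-species Tonks gas $z_1=z>1/e$, $z_k=0$ for $k\geq 2$: this is the most economical possible example, and your observation that $1/e$ is precisely the classical radius of convergence makes it sharp. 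The paper instead takes $\rho_k=c/k^3$ with $c$ small, defines $p$ and $z_k=\rho_k e^{kp}/(1-\sum_j j\rho_j)$, checks via the fixed point equation that these are indeed the pressure and activities, and observes that $z_k\to\infty$ exponentially fast. The paper's multi-species witness is more elaborate but deliberately so: it is chosen to dramatize the claim, stated just before the corollary, that $\mathcal{D}_{\rm vir}$ contains activity sequences diverging exponentially while $\mathcal{D}_{\rm May}$ requires $z_k=O(e^{-ak})$. Your example proves the corollary but does not illustrate that broader contrast, since it lives entirely in the single-species slice. Both proofs are valid; the trade-off is economy (yours) versus illustrative force (the paper's).
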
 

\begin{proof}
  Let $\rho_k = c/k^3$ with $c>0$ small enough so that $\sum_k k \rho_k <1$. Let $p:= (\sum_k \rho_k)/(1 - \sum_k k \rho_k)$ and $z_k:= \rho_k \exp( k p)/(1- \sum_j j\rho_j)$. Then $p = \sum_k z_k \exp( - k p)$ and it follows that $p = p(\vect{z})$ and $\rho_k = \rho_k(\boldsymbol{z})$. Since $\sum_k k \rho_k<1$, the virial expansion converges and $\boldsymbol{z}$ is in $\mathcal{D}_{\rm vir}$. At the same time $z_k$ diverges as $k\to \infty$ exponentially fast, so the activity expansion cannot converge and $\vect{z} \notin \mathcal{D}_{\rm May}$. Thus $\mathcal{D}_{\rm May} \subsetneqq \mathcal{D}_{\rm vir}$. The identity $\mathcal{D}_{\rm vir} = \mathcal{D}_{\rm fluid}$ is part of Theorem~\ref{thm:vir}. 
\end{proof}

We conclude with a technical result concerning the use of inverse function theorems which complements Section 2.2 in~\cite{multivirial}. 
Often virial expansions are obtained by inverting the density-activity relation with the help of an inverse function theorem. This works for finitely many species because $\partial \rho_k /\partial z_j$ at $\vect{z}=0$ is the identity matrix and in particular invertible.
For infinitely many species, it is still true that the matrix $(\partial \rho_k/\partial z_j)$ of directional (G{\^a}teaux) derivatives at $\vect{z}=\vect{0}$ is the identity matrix. However the existence of directional derivatives is no longer enough to guarantee Fr{\'e}chet differentiability in suitable Banach spaces.  In theory one could imagine that this is just a technicality requiring additional estimates. The next theorem shows that to the contrary, for the natural Banach spaces at hand, the usual inversion procedure cannot work because $\vect{z}\mapsto \vect{\rho}(\vect{z})$ is not a bijection between neighborhoods of the origin. 

We restrict to $\ell_k =k$. For $a\geq 0$, let 	$||\vect{x}||_a:= \sum_{n=1}^\infty  |x_n|\exp(an)$ and 
$E_a:= \{ (x_n)_{n\in \N} \in \R^\N \mid ||x||_a < \infty\}$. These Banach spaces are natural because of the convergence criterion $||\vect{z}||_a \leq a$ from Theorem~\ref{thm:cluster-convergence}. They allow for negative and complex activities and densities.

\begin{theorem} \label{thm:inverse} 
  Suppose that $\ell_k = k$ for all $k\in \N$, and let $E_a,V_b\subset \C^\N$ be the spaces of complex activities and densities introduced above. There is no way to choose 
$a>0$, $b\geq 0$ and neighborhoods $U_a\subset E_a$ and $V_b\subset E_b$ of the origin so that $\vect{\rho}(\cdot)$ is a bijection from $U_a$ onto $V_b$. 
\end{theorem}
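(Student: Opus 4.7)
\textbf{Proof plan for Theorem~\ref{thm:inverse}.}

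The strategy is to read the obstruction directly off the explicit inversion formulas of Theorem~\ref{thm:vir}: in the fluid domain one has
\[
 \rho_k = \frac{z_k e^{-kp}}{1+\sum_j jz_j e^{-jp}}, \qquad z_k = \frac{\rho_k e^{kp}}{1-\sigma},
\]
with $\sigma=\sum_j j\rho_j$ and $p=\sum_j\rho_j/(1-\sigma)$, so the map $\vect{z}\mapsto\vect{\rho}$ is bijective with an explicit abstract inverse. Any failure of bijectivity between neighborhoods of the origin in $E_a$ and $E_b$ must therefore come from a functional-analytic mismatch: the inverse formula contains an exponential factor $e^{kp}$ which, when $p>0$, destroys just enough decay to break surjectivity. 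I will show that this happens in each of the three cases $b<a$, $b>a$, and $b=a$ by producing explicit near-zero test sequences.

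For $b<a$, the bound $|\rho_k|e^{ak}\leq |z_k|e^{(a-\mathrm{Re}\,p)k}/|1-\sigma|$, valid for small $\vect{z}$ (where $|1-\sigma|\geq \tfrac{1}{2}$ and $\mathrm{Re}\,p\geq 0$), shows that $\vect{\rho}(U_a)\subset E_a\subsetneq E_b$; however every neighborhood $V_b$ of the origin in $E_b$ contains elements outside $E_a$, for instance $\rho_k=\varepsilon e^{-ck}/k^2$ with $b<c<a$ and $\varepsilon$ small. For $b>a$, the roles reverse: taking $z_k=\varepsilon e^{-ak}/k^2\in U_a$, the corresponding pressure $p=p(\vect{z})$ is strictly positive and smaller than $b-a$ for $\varepsilon$ small, so
\[
  \|\vect{\rho}(\vect{z})\|_b = \frac{1}{1+\sum_j jz_je^{-jp}}\, \varepsilon \sum_{k\geq 1}\frac{e^{(b-a-p)k}}{k^2}=\infty,
\]
forcing $\vect{\rho}(\vect{z})\notin E_b$; hence $\vect{\rho}(U_a)\not\subset V_b$ no matter how small $U_a$ is chosen. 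The critical case $b=a$ uses the same test sequence from the density side: $\rho_k=\varepsilon e^{-ak}/k^2$ lies in $V_a$ for small $\varepsilon$, the inversion formula yields $p=\sum_j\rho_j/(1-\sigma)>0$, and the unique formal preimage has
\[
 \|\vect{z}\|_a = \frac{\varepsilon}{1-\sigma}\sum_{k\geq 1}\frac{e^{pk}}{k^2}=\infty,
\]
since $p>0$ makes the summands grow unboundedly; by injectivity of $\vect{z}\mapsto\vect{\rho}$, no other preimage in $E_a$ can exist, so $\vect{\rho}\notin\vect{\rho}(U_a)$ and $V_a$ is not contained in the image.

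The cases $b\neq a$ are essentially statements that the identity map between $E_a$ and $E_b$ fails to be a bounded surjection in one direction or the other, and the real work goes into the critical case $b=a$. There the argument rests on two facts that cannot be circumvented: the strict positivity of the fixed-point pressure whenever $\vect{\rho}\not\equiv 0$, and the observation that the marginal decay $\rho_k\sim e^{-ak}/k^2$ --- just strong enough to put $\vect{\rho}$ into $E_a$ --- is wiped out by the additional exponential factor $e^{kp}$ that the inversion $\vect{\rho}\mapsto\vect{z}$ introduces. This is the precise obstruction that makes an inverse-function-theorem approach to inverting $\vect{z}\mapsto\vect{\rho}$ on these Banach spaces unworkable.
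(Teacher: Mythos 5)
Your overall strategy is the same as the paper's: probe the putative bijection with explicit test sequences on both the density side and the activity side, and let the exponential factor $e^{kp}$ in the inversion formulas of Theorem~\ref{thm:vir} do the damage. Your $b=a$ argument is exactly the paper's first step (there with $\tilde\rho_k=\eps k^{-3}e^{-bk}$, concluding $b\ge a+p>a$), and your $b>a$ argument is a correct variant of the paper's second step: you use positive activities $z_k=\eps e^{-ak}/k^2$, for which $0<p(\vect{z})\le\sum_k z_k<b-a$ when $\eps$ is small, so that $\rho_k$ decays only like $e^{-(a+p)k}$ and falls out of $E_b$; the paper instead uses negative activities $z_k=-\eps k^{-3}e^{-ak}$, for which $p<0$, to get the strict inequality $b\le a+p<a$. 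Both routes work.

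The gap is in your case $b<a$. The claimed containment $\vect{\rho}(U_a)\subset E_a$ is false: $U_a$ contains, for instance, the all-negative sequence $z_k=-\eps k^{-3}e^{-ak}$, for which $p(\vect{z})<0$, so that $|\rho_k|e^{ak}$ behaves like $\eps k^{-3}e^{-p k}\to\infty$ and $\vect{\rho}(\vect{z})\notin E_a$ --- this is precisely the mechanism the paper exploits in its own second step, so it cannot be assumed away. Your hypothesis $\mathrm{Re}\,p\ge 0$ holds for non-negative activities but not on a full neighborhood of the origin in $E_a$, and restricting to ``small $\vect{z}$'' does not help, because surjectivity onto $V_b$ is a statement about the image of all of $U_a$. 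Fortunately the repair is already in your write-up: the density-side computation you use for $b=a$ applies verbatim when $b<a$ --- the unique candidate preimage of $\tilde\rho_k=\eps e^{-bk}/k^3$ has $\|\vect{z}\|_a=\mathrm{const}\cdot\sum_k k^{-3}e^{(p+a-b)k}=\infty$ a fortiori, since now $a-b>0$ in addition to $p>0$. (Take $k^{-3}$ rather than $k^{-2}$ so that $\sum_k k\tilde\rho_k<\infty$ even when $b=0$.) With that substitution your proof is complete and coincides in substance with the paper's.
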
 

Theorem~\ref{thm:inverse} is proven in Section~\ref{sec:inverse}. It explains why the virial expansion is much more delicate for infinitely many species than for finitely many species. A general result on multi-species virial expansions was nevertheless proven in~\cite{multivirial} using Lagrange-Good inversion~\cite{good60,ehrenborg-mendez}.

\subsection{Discrete system}
On the lattice we may take without loss of generality $\ell_k = k$, $k \in \N$. We use the multi-index notation from the previous section. The grand-canonical partition function for rods on a line $\{0,1,\ldots, L-1\}$ with $L\in \N$ is defined as in Eq.~\eqref{eq:partfct-cont}, except that the integration over rod end points is replaced by a summation over end points in $\{0,\ldots, L-1\}$, and the indicators should ensure that rods are contained in $\{0,1,\ldots,L-1\}$. The definition of the pressure, the Dirichlet type series $g(\theta)$,  the stability condition~\eqref{eq:stability}, and the abscissa of convergence~\eqref{eq:thetastar} are unchanged. The fixed point equation changes, however, and we define 
\begin{equation*}
  f(\theta) = - \log \bigl( 1- g(\theta)\bigr) = - \log \Bigl( 1- \sum_{k} z_k \exp( k \theta) \Bigr). 
\end{equation*} 
By expanding the logarithm, we find that $f(\theta)$ is a power series in $\exp(\theta)$ with positive coefficients and radius of convergence $R \leq \exp(\theta^*)$.
 The equation $f(\theta) = - \theta$ has a solution $\theta \leq \log R$ if and only if $g(\theta^*) \geq 1- \exp(\theta^*)$. 

\begin{definition} 
    Let $(z_k)_{k\in \N}$ be non-negative activities satisfying the stability condition~\eqref{eq:stability}. We say that the discrete system of rods is in the 
   \begin{enumerate}
     \item [(a)] fluid regime if $g(\theta^*) > 1- \exp(\theta^*)$;
     \item [(b)] close-packing regime if $g(\theta^*) < 1- \exp(\theta^*)$;
     \item [(c)] transition regime if $g(\theta^*) = 1- \exp(\theta^*)$.
   \end{enumerate}
\end{definition}

\begin{theorem} \label{thm:pressure-discrete}
  Let $(z_k)_{k\in \N}$ be non-negative activities satisfying the stability condition~\eqref{eq:stability}. Then 
  \begin{enumerate}
     \item [(a)] in the fluid regime the pressure $p(\vect{z})$ is the unique solution of $f(-p) = p$, and we have $p(\vect{z})>- \theta^*$.
     \item [(b)] In the close-packing and transition regimes the pressure is $p(\vect{z}) = - \theta^*$. 
   \end{enumerate}
\end{theorem}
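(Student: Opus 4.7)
The plan is a renewal/transfer-matrix analysis of $\Xi_L(\vect{z})$ that reduces the theorem to locating the smallest positive singularity of an explicit generating function, paralleling the continuous case but with sums in place of integrals. Splitting each configuration on $\{0,\ldots,L-1\}$ according to whether site $0$ is empty (leaving an arbitrary configuration of length $L-1$) or is the left end of a rod of some type $k\in\{1,\ldots,L\}$ (leaving an arbitrary configuration of length $L-k$) yields the renewal identity
\begin{equation*}
\Xi_L(\vect{z}) = \Xi_{L-1}(\vect{z}) + \sum_{k=1}^{L} z_k\, \Xi_{L-k}(\vect{z}),\qquad L\geq 1,\quad \Xi_0(\vect{z})=1.
\end{equation*}
Setting $\tilde P(s) := s + g(\log s) = \sum_{k\geq 1}(\mathbf{1}_{k=1}+z_k)\, s^k$ and summing the recursion against $s^L$, the grand generating function
\begin{equation*}
G(s) := \sum_{L\geq 0}\Xi_L(\vect{z})\, s^L = \frac{1}{1-\tilde P(s)}
\end{equation*}
is then an identity of real analytic functions on any interval $(0,r)$ on which both sides are defined; the stability condition~\eqref{eq:stability} ensures $r>0$.

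Next I would identify $p(\vect{z}) = -\log R$, where $R$ is the radius of convergence of $G$. Juxtaposition of independent configurations on adjacent segments gives the super-multiplicative bound $\Xi_L(\vect{z})\Xi_M(\vect{z}) \leq \Xi_{L+M}(\vect{z})$, so Fekete's lemma provides the limit defining $p(\vect{z})$, while Cauchy--Hadamard applied to the non-negative coefficients $\Xi_L(\vect{z})$ yields $p(\vect{z}) = -\log R$.

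To locate $R$ I would exploit the non-negativity of the coefficients of $\tilde P$, whose radius of convergence is $e^{\theta^*}$, and the elementary inequality $|\tilde P(s)| \leq \tilde P(|s|)$. Writing $s=e^\theta$, the equation $\tilde P(s)=1$ becomes $h(\theta) := g(\theta)+e^\theta-1=0$, and $h$ is continuous and strictly increasing on $(-\infty,\theta^*)$ with $h(-\infty)=-1$. In the fluid regime $h(\theta^*)>0$, so a unique $\theta_0\in(-\infty,\theta^*)$ satisfies $h(\theta_0)=0$; the inequality above then implies $1/(1-\tilde P(s))$ is analytic on $|s|<e^{\theta_0}$ and blows up at $s=e^{\theta_0}$, so $R=e^{\theta_0}$ and $p(\vect{z}) = -\theta_0 > -\theta^*$ is the unique solution of $f(-p)=p$. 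In the transition regime $h(\theta^*)=0$, the same argument delivers $R=e^{\theta^*}$, hence $p(\vect{z})=-\theta^*$. In the close-packing regime $h(\theta^*)<0$, so $1-\tilde P(s)$ has no zeros in $|s|\leq e^{\theta^*}$, giving only $R\geq e^{\theta^*}$; the matching upper bound $R\leq e^{\theta^*}$ follows from the configurational lower estimate $\Xi_L(\vect{z}) \geq z_L$ (a single rod of type $L$ covers the whole segment) combined with $\limsup_L L^{-1}\log z_L = -\theta^*$, yielding again $p(\vect{z}) = -\theta^*$.

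The main obstacle I anticipate is the close-packing case: there the analytic formula $1/(1-\tilde P(s))$ provides no real singularity on $[0,e^{\theta^*}]$, so one cannot read off $R = e^{\theta^*}$ from the generating function alone. The fix is to supplement the analytic input with the direct lower bound $\Xi_L(\vect{z})\geq z_L$ coming from the single long-rod configuration. Once this is combined with the analytic estimates for $1-\tilde P$, the trichotomy asserted in Theorem~\ref{thm:pressure-discrete} is immediate.
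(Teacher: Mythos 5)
Your argument is correct and follows the same renewal-equation strategy as the paper: derive the discrete recurrence for $\Xi_L$, pass to the generating function $G(s)=1/(1-\tilde P(s))$, and read off $p(\vect{z})=-\log R$ from the radius of convergence $R$ of $G$. The one place where you are more explicit than the paper is worth keeping: in the close-packing regime the formula $1/(1-\tilde P(s))$ alone does not force $R=e^{\theta^*}$, and your supplementary lower bound $\Xi_L\geq z_L$ (single rod covering the whole segment), together with $\limsup_L L^{-1}\log z_L=-\theta^*$, is exactly the missing ingredient the paper leaves implicit when it asserts that $R$ equals $e^{\theta^*}$ when the fixed point equation has no solution below $\theta^*$.
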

The theorem is proven in Section~\ref{sec:fixed-point}.


\begin{theorem} \label{thm:convergence-discrete}
  Let $(z_k)_{k\in \N}$ be non-negative activities satisfying the stability condition~\eqref{eq:stability}. The following holds:  
	\begin{enumerate} 
		\item[(a)] If $\sum_k z_k \exp(a k) \leq \exp(a) -1$ for some $a>0$, then the pressure defined by Eq.~\eqref{eq:pressure-def} is given by
                   \begin{equation} 
	           p(\vect{z}) = \sum_{\vect{n} \in \mathcal{I}^*} \frac{\vect{z} ^{\vect{n}} }{\vect{n}!} \frac{(\sum_k k n_k - 1)!}{(\sum_k kn_k - \sum_k n_k)!} (-1)^{\sum_k n_k}
	\end{equation}
            and the sum is absolutely convergent. 
	\item[(b)] If $\sum_k z_k \exp(a k) >\exp(a)-1$ for all $a>0$, then
                   \begin{equation*}
                       \sum_{\vect{n} \in \mathcal{I}^*}\Bigl| \frac{\vect{z} ^{\vect{n}} }{\vect{n}!}\frac{(\sum_k k n_k - 1)!}{(\sum_k kn_k - \sum_k n_k)!}
 \Bigr| =\infty. 
                   \end{equation*}
	\end{enumerate}  
\end{theorem}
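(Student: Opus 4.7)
The plan mirrors the proof of Theorem~\ref{thm:cluster-convergence}. The idea is to apply Lagrange inversion to the fixed-point equation $1 - e^{-p} = \sum_k z_k e^{-kp}$ from Theorem~\ref{thm:pressure-discrete} to obtain the explicit formula, and then to analyze absolute convergence via a companion ``flipped'' fixed-point equation whose monomial coefficients are all non-negative.

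For the formula I would substitute $u = 1 - e^{-p}$, so the fixed-point equation reads $u = \sum_k z_k (1-u)^k$ and $p = -\log(1-u)$. Introducing an auxiliary parameter $t$ via $u_t = t \sum_k z_k (1-u_t)^k$ and applying Lagrange inversion with $H(u) = -\log(1-u)$, $H'(u) = 1/(1-u)$, one obtains
\begin{equation*}
[t^N] H(u_t) \;=\; \frac{1}{N}\,[u^{N-1}]\, \frac{1}{1-u} \Bigl(\sum_k z_k (1-u)^k\Bigr)^N.
\end{equation*}
Expanding the $N$-th power, grouping $N$-tuples $(k_1,\ldots,k_N)$ by the multi-index $\vect{n}$ with $N = \sum_k n_k$ and $M = \sum_k k n_k$, and using $[u^{N-1}](1-u)^{M-1} = (-1)^{N-1}\binom{M-1}{N-1}$ (valid because $M \geq N$ when $\ell_k = k \geq 1$), the substitution $t = 1$ yields the stated formula up to an overall sign convention.

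For convergence in part (a), note that the absolute value of the coefficient of $\vect{z}^{\vect{n}}$ is $\tfrac{1}{\vect{n}!}\tfrac{(M-1)!}{(M-N)!} \geq 0$. These are precisely the coefficients produced by Lagrange inversion applied to the \emph{flipped} fixed-point equation $e^F = 1 + \sum_k z_k e^{kF}$: substituting $v = e^F - 1$ gives $v = \sum_k z_k (1+v)^k$, so the binomial expansion of $(1+u)^{M-1}$ replaces $(1-u)^{M-1}$ and eliminates the oscillation. The flipped equation admits a non-negative solution $F = a$ iff $\sum_k |z_k| e^{ak} = e^a - 1$ for some $a \geq 0$, and monotonicity of $a \mapsto e^a - 1 - \sum_k |z_k| e^{ak}$ shows this is equivalent to the criterion $\sum_k |z_k| e^{ak} \leq e^a - 1$ for some $a > 0$. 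A bootstrap on the truncations (monotone in $N$) shows that the non-negative series converges to $F(|\vect{z}|)$ exactly when such a solution exists. When finite, the original series solves the fixed-point equation of Theorem~\ref{thm:pressure-discrete}; since $\theta^* \geq a > 0$ puts $\vect{z}$ in the fluid regime, Theorem~\ref{thm:pressure-discrete}(a) identifies this sum with the physical pressure $p(\vect{z})$. Part (b) is then the contrapositive: if no $a > 0$ satisfies the criterion, no non-negative fixed point of the flipped equation exists, so the series of non-negative terms must diverge.

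The main obstacle is making rigorous the equivalence between finiteness of the non-negative series and existence of a solution to the flipped fixed-point equation; this requires care when interchanging limits and when promoting the formal Lagrange inversion to an analytic identity. Dominating the partial Lagrange sums by $F(|\vect{z}|)$ and invoking monotone convergence handles the forward direction, while the reverse direction relies on a bootstrap argument showing that a finite sum must satisfy the flipped equation by termwise continuity. Once this equivalence is in place, identification with the physical pressure via Theorem~\ref{thm:pressure-discrete} and the divergence statement follow cleanly.
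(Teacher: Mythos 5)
Your derivation of the formula takes a genuinely different route from the paper: you substitute $u=1-e^{-p}$ and apply single-variable Lagrange inversion with an auxiliary parameter $t$, whereas the paper's Proposition~\ref{prop:tonks-formula-discrete} applies multi-variate Lagrange--Good inversion in the variables $w_k=z_k e^{kF}$. Both routes yield the same coefficients $\frac{1}{\vect{n}!}\frac{(M-1)!}{(M-N)!}$ with the alternating sign (the sign in your computation is $(-1)^{N-1}$, consistent with $p=\log(1+z_1)$ in the monomer-only case, and you flag the convention). Your divergence argument is essentially the paper's: the non-negative series, if finite, satisfies the flipped fixed-point equation, and taking $a=F$ shows the criterion must hold. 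The real divergence from the paper is in the convergence proof: the paper groups monomials by $S=\sum_k k n_k$, recognizes a multinomial probability, and sums a geometric series $e^{-\alpha S}/S$, while you propose a monotone bootstrap on truncations driven by the fixed-point structure.

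That bootstrap is the crux, and as written it has two gaps. First, the equivalence between solvability of the flipped equation and the criterion is correct but the stated reason is not: $a\mapsto e^a-1-\sum_k z_k e^{ak}$ is generally \emph{not} monotone (it can rise and then fall); the equivalence follows from the intermediate value theorem, using that this function is $-\sum_k z_k<0$ at $a=0$ and $\geq 0$ at the hypothesized $a$. Second, ``dominating the partial Lagrange sums by $F(|\vect{z}|)$'' is circular, since the finiteness of $F(|\vect{z}|)$ is exactly what is to be proven. One must dominate the truncations by Picard iterates of the fixed-point map and then by the smallest fixed point. This cannot be done directly on $F$, because the iteration $F^{(m+1)}=\log\bigl(1+\sum_k z_k e^{kF^{(m)}}\bigr)$ produces iterates whose expansion coefficients in $\vect{z}$ are \emph{not} coefficientwise non-negative (the $\log$ introduces sign changes), so simple monotonicity of coefficients fails. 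The fix is to work with $v=e^F-1$, which satisfies $v=\sum_k z_k(1+v)^k$; this iteration \emph{is} coefficientwise non-negative and monotone, so $v^{(m)}\nearrow v^*$ and the degree-$\leq m$ truncation of $v$ is bounded by $v^{(m)}\leq v^*$; one then transfers back using the fact (from the explicit formula) that $F$ has non-negative coefficients and hence is dominated termwise by $v$. You would also still need the Abel's-theorem step for the boundary case $\sum_k z_k e^{ak}=e^a-1$ and a continuity argument identifying the series with the \emph{smallest} fixed point when the equation has two, just as the paper does. In short, the plan is workable and different from the paper, but the step you yourself identify as the ``main obstacle'' is where the actual work lies and would need to be carried out.
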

The explicit formula for the expansion coefficients is proven in Section~\ref{sec:trees}, convergence is addressed in Section~\ref{sec:convergence}. 

\begin{remark} 
  There is an interesting activity region intermediate between the domain of convergence and the fluid domain. 
  Write $X$ and $Y$ for discrete intervals $\{x,\ldots,x+k-1\}$, $\{y,\ldots,y+j-1\}$.
  Suppose that 
  \begin{equation*} 
    \sup_{X} \frac{1}{|Y|} \sum_{Y:\, Y \cap X \neq \emptyset} |Y| z_{|Y|} = \sup_{k\geq 1} \frac{1}{k} \sum_{j\geq 1} (k+j-1)j z_j = \sum_{j \geq 1} j^2 z_j <1.
  \end{equation*} 
  Then the infinite volume Gibbs measure can be constructed as the unique stationary measure of a Markov birth and death process, the \emph{loss network}~\cite{fernandez-ferrari-garcia}. The one-dimensional model confirms that the loss network representation works in a domain that is in general larger than the domain of convergence of the activity expansions, since the loss network does not require exponentially decreasing activities $z_k = O(\exp(-a k))$. 
\end{remark} 

\begin{theorem} \label{thm:vir-discrete}
   Let $\vect{z}=(z_k)_{k\in \N}$ be non-negative activities in the fluid domain. Then the partial derivatives $\frac{\partial p}{\partial z_k}(\vect{z})$, $k\in \N$, at $\vect{z}$ exist and we have 
	\begin{equation*} 
		p(\vect{z})  =\log \Bigl(1+ \frac{\sum_k \rho_k (\vect{z})}{1- \sum_k k \rho_k (\vect{z})}\Bigr),  \quad
		z_k  = \frac{\rho_k(\vect{z}) \exp[ (k-1) p(\vect{z})]}{1- \sum_k k \rho_k(\vect{z})}
	\end{equation*} 
	with $ \sum_k k \rho_k(\vect{z}) <1$. 
\end{theorem}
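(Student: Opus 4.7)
The plan is to start from the fixed point equation of Theorem~\ref{thm:pressure-discrete}(a), which in the fluid regime reads
\begin{equation*}
  1 - e^{-p(\vect{z})} = \sum_k z_k e^{-k p(\vect{z})},
\end{equation*}
and to apply an implicit function theorem one coordinate at a time to obtain the partial derivatives. Fix $k \in \N$ and all $z_j$, $j \neq k$, and view $p$ as determined by the single scalar equation $F(q, z_k) = 0$ with $F(q, z_k) := \sum_j z_j e^{-j q} - 1 + e^{-q}$. Since we are in the fluid regime, $p(\vect{z}) > -\theta^*$, i.e.\ $-p(\vect{z}) < \theta^*$ lies strictly below the abscissa of convergence of $g(\theta)$; hence the series $g(\theta)$ and its termwise derivative $g'(\theta) = \sum_j j z_j e^{j\theta}$ converge absolutely on a neighborhood of $\theta = -p(\vect{z})$, so $F$ is $C^1$ there. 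Its $q$-derivative at the solution,
\begin{equation*}
  \partial_q F\bigl(-p(\vect{z}), z_k\bigr) = -\Bigl(e^{-p(\vect{z})} + \sum_j j z_j e^{-j p(\vect{z})}\Bigr),
\end{equation*}
is strictly negative. The implicit function theorem then yields a smooth local solution, which must agree with the true pressure by uniqueness in $(-\theta^*, \infty)$, so $\partial p/\partial z_k$ exists.

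Implicit differentiation of $F(-p, z_k) = 0$ gives
\begin{equation*}
  \frac{\partial p}{\partial z_k} = \frac{e^{-k p}}{e^{-p} + \sum_j j z_j e^{-j p}}, \qquad
  \rho_k(\vect{z}) = \frac{z_k e^{-k p}}{e^{-p} + \sum_j j z_j e^{-j p}}.
\end{equation*}
Introduce the shorthand $A := \sum_j j z_j e^{-j p}$ and $B := e^{-p} + A$. Then $\rho_k = z_k e^{-kp}/B$, and summing against $1$ and $k$ respectively, together with the fixed point equation $\sum_k z_k e^{-kp} = 1 - e^{-p}$, yields
\begin{equation*}
  \sum_k \rho_k = \frac{1 - e^{-p}}{B}, \qquad \sum_k k \rho_k = \frac{A}{B} = 1 - \frac{e^{-p}}{B}.
\end{equation*}
The second identity immediately shows $\sum_k k \rho_k(\vect{z}) < 1$ and $1 - \sum_k k\rho_k = e^{-p}/B$.

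Dividing the two identities produces
\begin{equation*}
  \frac{\sum_k \rho_k}{1 - \sum_k k \rho_k} = \frac{1 - e^{-p}}{e^{-p}} = e^{p} - 1,
\end{equation*}
which rearranges to the asserted formula $p(\vect{z}) = \log\bigl(1 + \sum_k \rho_k/(1 - \sum_k k \rho_k)\bigr)$. The activity formula then follows from $z_k = \rho_k B e^{k p}$ and $B = e^{-p}/(1 - \sum_j j \rho_j)$, giving $z_k = \rho_k e^{(k-1) p}/(1 - \sum_j j \rho_j)$. The only delicate point in the argument is the legitimacy of term-by-term differentiation and summation in the various series; this is precisely the step that uses the fluid-regime inequality $-p(\vect{z}) < \theta^*$, which keeps us strictly inside the domain of absolute and locally uniform convergence of $g$ and $g'$.
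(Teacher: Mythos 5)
Your proof is correct and follows essentially the same route as the paper, which omits the discrete proof precisely because it is the analogue of the continuous argument in Section~\ref{sec:virial}: apply the implicit function theorem to the fixed point equation $1-e^{-p}=\sum_k z_k e^{-kp}$, compute $\rho_k = z_k e^{-kp}/(e^{-p}+\sum_j j z_j e^{-jp})$, and sum against $1$ and $k$. The only blemish is notational: with your definition $F(q,z_k)=\sum_j z_j e^{-jq}-1+e^{-q}$ the zero is at $q=p(\vect{z})$, not at $q=-p(\vect{z})$ as written, but the derivative value and all subsequent formulas are the ones obtained at the correct point, so nothing substantive is affected.
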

The proof is analogous to the continuous case treated in Section~\ref{sec:virial} and therefore omitted. A heuristic derivation of the pressure-density and density-activity relations is given in Section~\ref{sec:waals}. 

The pressure-density expansion converges again in all of $\mathcal{D}_{\rm fluid}$.  In the special case when there are only monomers, i.e.,  $z_k=0$ for $k\geq 2$, we recover the well-known equations $p= \log (1+z_1)$, $z_1= \rho_1/(1-\rho_1)$. 

\section{A fixed point equation} \label{sec:explanations}

In this section we provide two different explanations of the fixed point equations for the pressure, a physical explanation in terms of van der Waals mixtures and a probabilistic explanation in terms of renewal processes. 

\subsection{Van der Waals mixtures}\label{sec:waals}
We assume the reader is familiar with notions from statistical mechanics such as the constant pressure ensemble, and employ common approximations such as $\log N! \approx N(\log N-1)$; for notational simplicity we follow physics conventions and pretend that the approximations are identities. 

Consider first the continuous setting. The partition function of the constant pressure ensemble at fixed number $N_k$ of rods of size $k$ can be computed as 
\begin{equation}  \label{eq:isobaric}
	Q(N_1,N_2,\ldots;p) = \binom{M}{N_1,N_2,\ldots} e^{- p \sum_k N_k \ell_k} \Bigl(\int_0^\infty e^{-p r} \dd r\Bigr)^{M-1} 
\end{equation} 
where $M= \sum_k N_k$ is the total number of rods. The multinomial coefficient represents the number of ways to assign rod types to the $M$ rods labelled from left to right. The integral $\int_0^\infty \exp(- pr) \dd r$ comes from integrating over the spacing between two consecutive rods. Eq.~\eqref{eq:isobaric} yields the Gibbs free energy 
\begin{equation} \label{eq:gibbs-energy}  
	G(N_1,N_2,\ldots; p) = \sum_k N_k \Bigl( \log \frac{p N_k}{\sum_j N_j} 
				+ p \ell_k \Bigr).  
\end{equation}
Alternatively, we can compute directly the Helmholtz free energy 
\begin{equation} \label{eq:free-energy} 
	F(N_1,N_2,\ldots;V) =\sum_k N_k \Bigl( \log \frac{N_k}{V - \sum_j  N_j\ell_j} -1 \Bigr). 
\end{equation} 
(see Lemma~\ref{lem:Z}). Recall $G = F + pV$ with $V = - \partial G/\partial p$, $p = - \partial F/\partial V$, which give
\begin{equation} \label{eq:vwaals} 
   p = \frac{\sum_k N_k}{V -\sum_k \ell_k N_k}. 
\end{equation}
This is  a multi-species variant of the van der Waals equation, and generalizes the well-known equation of state of the single-species Tonks gas~\cite{tonks36}.
Using the relation $\log z_k = \partial G/\partial N_k = \partial F/\partial N_k$,  we obtain 
\begin{equation} \label{eq:zkpk}
	z_k =\frac{p N_k}{\sum_j N_j}\,  \exp(  p \ell_k). 
\end{equation} 
Eqs.~\eqref{eq:vwaals} and~\eqref{eq:zkpk} explain the formulas in Theorem~\ref{thm:vir}. The fixed point equation~\eqref{eq:pressure-equation} is obtained from Eq.~\eqref{eq:zkpk} by multiplying both sides with $\exp( - p\ell_k)$ and summing over $k$. 

For the discrete system, the integral in Eq.~\eqref{eq:isobaric} has to be replaced by a geometric sum. The Gibbs energy~\eqref{eq:gibbs-energy} becomes 
\begin{equation*} 
  G(N_1,N_2,\ldots;p)= \sum_k N_k \Bigl( \log \frac{(1- \exp(-p))N_k}{\sum_j N_j} + pk \Bigr).
\end{equation*} 
The equation of state is 
\begin{equation*}  
 p = \log \Bigl( 1 + \frac{\sum_k N_k}{V - \sum_k k N_k}\Bigr),
\end{equation*} 
and the density-activity relation is 
\begin{equation*}
  z_k = \frac{(1-\exp(-p))N_k}{\sum_j N_j}\, \exp(pk),
\end{equation*} 
compare Theorem~\ref{thm:vir-discrete}. We multiply with $\exp(-pk)$, sum over $k$, and obtain the fixed point equation~\eqref{eq:fixed-point-discrete}.

\subsection{Renewal theory}
As done in~\cite{ioffe-velenik-zahradnik06}, one-dimensional polymer partition functions can be treated by renewal theory~\cite[Chapter XI]{fellervol2}. The key idea is to reinterpret the line as a time axis and the starting point of a rod as an event (a light bulb breaks and has to be renewed), and the intervals between two events as waiting or interrarival times. The Gibbs measure is invariant with respect to a suitable rescaling of the activities~\cite{gruber-kunz71}. If the activities can be rescaled in such a way that they define a probability measure on waiting times (the light bulb's lifetime), then the canonical partition function is identified with the probability of a certain event and probabilistic techniques apply. The crucial point now is that this rescaling is possible if and only if the fixed point equation has a solution. 

For details, consider first the discrete case. For $X\subset \Z$ set $\Phi(X)= z_{k}$ if $X= \{x,x+1,\ldots,x+k-1\}$ is a discrete interval of cardinality $k\geq 2$,  $\Phi(X) = 1+z_1$ if $X$ has cardinality $1$, and $\Phi(X)=0$ otherwise. Then 
\begin{equation*} 
  \Xi_L(\vect{z}) = \sum_{\{X_1,\ldots,X_D\}} \Phi(X_1) \cdots \Phi(X_D) 
\end{equation*} 
where the sum runs over all partitions of $\{0,1,\ldots,L-1\}$. An element $X_j= \{x\}$ of cardinality $1$ corresponds to either an unoccupied lattice site or a site occupied by a rod $\{x\}$ of cardinality $1$. Rescaling the activities as $\Phi_\xi(X) = \xi^{\#X} \Phi(X)$ for some $\xi>0$ multiplies the grand-canonical partition function $\Xi_L(\vect{z})$ by $\xi^L$ and leaves the associated probability measure unchanged~\cite{gruber-kunz71}. We wish to choose $\xi$ so that 
\begin{equation}  \label{eq:renewal-discrete}
  (1+z_1) \xi + \sum_{k\geq 2}z_k \xi^k = \sum_{k=1}^\infty \Phi_\xi(\{0,\ldots, k-1\}) =1.
\end{equation}
Substituting $\xi = \exp(−p)$ we obtain the discrete fixed point equation (6). For
activities in the fluid domain, there is therefore a unique solution ξ that stays away
from the radius of convergence, so that the expected  bulb lifetime $\mu  = (1 + z_1)\xi +
\sum_{k\geq 2} k z_k \xi^k$ is finite. Renewal theory then tells us that $\xi^L \Xi_{L}(\vect{z})$ converges to $1/\mu$: the probability that a light bulb has to be renewed at time L given that there was a renewal at time $0$ converges to the
inverse of the average bulb lifetime. In particular, $\frac{1}{L} \log \Xi_L(\vect{z})\to -\log \xi$ and it
follows that the pressure is indeed $p = - \log \xi$. 

For the continuous system we multiply the partition function with $\exp( - \lambda L)$ and distribute the additional factor $\exp(- \lambda L)$ over the rods and empty spaces between them. This attributes the weight $\sum_k z_k \mathbf{1}(r\geq \ell_k)\exp( - \lambda r)$ to an interval going from the starting point of a rod to the next one (instead of treating emtpy space as monomers, we attach an empty interval to the preceding rod). These weights define a probability distribution if 
\begin{equation*}	
	\int_0^\infty \sum_k z_k \mathbf{1}(r\geq \ell_k) \exp( - \lambda r) \dd r= \frac{1}{\lambda} \sum_k  z_k \exp( - \lambda \ell_k) = 1,
\end{equation*} 
and we recognize the fixed point equation~\eqref{eq:pressure-equation}.

\section{Computation of the pressure and packing fraction} \label{sec:fixed-point}

Here we prove Theorems~\ref{thm:pressure} and~\ref{thm:pack-frac} for the continuous system and Theorem~\ref{thm:pressure-discrete} for the discrete system. 

\begin{proof}[Proof of Theorem~\ref{thm:pressure}]
	For $\lambda \geq 0$, let $F(\lambda):= \int_0^\infty \exp( - \lambda  L) \Xi_L(\vect{z}) \dd L$ be the Laplace transform of the grand-canonical partition function with respect to the system length $L$.
	We note that $F(\lambda)< \infty$ for $\lambda>p(\vect{z})$ and $F(\lambda) = \infty$ for $\lambda< p(\vect{z})$. Let $\theta(\vect{z})$ be the solution to $\sum_k z_k \exp( \ell_k \theta) = - \theta$, if the solution exists, and $\theta(\vect{z}) = \theta^*$ otherwise. 

	Ordering rods from left to right and singling out the right end point $x$ of the left-most rod,  we see that 
	the partition function satisfies the renewal equation \cite[Chapter XI]{fellervol2}
	\begin{equation*} 
		\Xi_L(\vect{z}) = 1 + \sum_k z_k \int_{\ell_k}^{L} \Xi_{L-x}(\vect{z}) \dd x.		
	\end{equation*}
	It follows that for all $\lambda \geq 0$, 
	\begin{equation*}
		F(\lambda) = \frac{1}{\lambda} \Bigl(1+ F(\lambda ) \sum_k  z_k e^{-\lambda \ell_k} \Bigr).
	\end{equation*} 
	If $F(\lambda)< \infty$, then $\sum_k  z_k e^{-\lambda \ell_k}$ 
	must be finite as well, thus $\lambda \geq - \theta^*$ and 
	\begin{equation} \label{eq:lambdaf}
		\Bigl( \lambda - \sum_k z_k \exp(- \ell_k \lambda) \Bigr) F(\lambda) = 1. 
	\end{equation} 
	Since $F(\lambda)>0$, we deduce $\sum_k z_k \exp(- \lambda \ell_k) \leq \lambda$ which implies that  $ - \lambda \leq \theta(\vect{z})$. 
	Conversely, if $- \lambda < \theta(\vect{z})$, then $\lambda>\sum_k z_k \exp( - \lambda \ell_k)$
        and $F(\lambda)$ is the unique solution to Eq.~\eqref{eq:lambdaf}, so in particular $F(\lambda)<\infty$. 

	We have shown that $F(\lambda) < \infty$ for $- \lambda <  \theta(\vect{z})$ and $F(\lambda)= \infty$ for $- \lambda> \theta(\vect{z})$. It follows that $p(\vect{z}) = - \theta(\vect{z})$. 
\end{proof} 

\begin{proof}[Proof of Theorem~\ref{thm:pressure-discrete}] 
  For the discrete system, the grand-canonical partition function satisfies the discrete renewal equation~\cite{ioffe-velenik-zahradnik06}
  \begin{equation}\label{eq:recurrence}
    \Xi_L(\vect{z}) = (1+z_1)\Xi_{L-1}(\vect{z}) + \sum_{k=2}^{L-1} z_k \Xi_{L-k}(\vect{z}) + z_L.
  \end{equation} 
  Eq.~\eqref{eq:lambdaf} for the Laplace transform in the continuous setting is replaced by a relation between generating functions 
  \begin{equation*} 
    1 + \sum_{L=1}^\infty \Xi_L(\vect{z}) \xi^L = \Bigl( (1+z_1) \xi + \sum_{k=2}^\infty z_k \xi^k \Bigr) \Bigl( 1+ \sum_{L=1}^\infty \Xi_L(\vect{z}) \xi^L\Bigr)
  \end{equation*} 
  which yields 
  \begin{equation*} 
     1 + \sum_{L=1}^\infty \Xi_L(\vect{z}) \xi^L =\frac{1}{ 1- (1+z_1) \xi - \sum_{k=2}^\infty z_k \xi^k}.
  \end{equation*} 
 The radius of convergence $R$ on the left-hand side is given by the solution $R<\exp( \theta^*)$ to $(1+z_1)R + \sum_{k=2}^\infty z_kR ^k =1$ if the solution exists, or equal to $\exp( \theta^*)$ if it does not. Since $p=- \log R$, we are done.   
\end{proof}

Now we come to the limit law for the packing fraction of the continuous systems. In principle, we could deduce it from Theorem~\ref{thm:pack-frac}, using  $\mu$-dependent activities $z_k(\mu) = z_k \exp( \ell_k\mu)$ and general theorems relating limit laws to differentiability and strict convexity of the pressure as a function of $\mu$. We prefer to follow a more direct approach, which has the advantage of shedding additional light on the value of the pressure. In particular, Lemma~\ref{lem:pvar} below expresses the pressure in terms of a variational problem whose minimizer corresponds to the limiting packing fraction. 

We start by computing the multicanonical partition function. 
For $\vect{N}= (N_k)_{k\in \N} \in \mathcal{I}^*$ and $L>0$, let 
\begin{equation} \label{eq:Z-def}
	Z_L(N_1,N_2,\ldots):=  \frac{1}{\prod_k N_k!} \int_{0}^{L-\ell_1} \dd x_{11} \cdots \int_0^{L-\ell_k} \dd x_{kN_k}\cdots  \mathbf{1}(\text{rods do not overlap})
\end{equation}
so that $\Xi_L(\vect{z}) = 1+ \sum_{\vect{N} \in \mathcal{I}^*} \vect{z}^{\vect{N}} Z_L(\vect{N})$. The next lemma is a variant of the representation of the canonical partition function for non-overlapping rods of fixed length \cite{tonks36}. 

\begin{lemma} \label{lem:Z}
	For every $\vect{N} \in \mathcal{I}^*$ and every $L>0$ such that $\sum_k N_k\ell_k < L$, we have  
	\begin{equation} \label{eq:Z}
		Z_L(N_1,N_2,\ldots) = \frac{(L- \sum_k N_k \ell_k)^{\sum_k N_k}}{ \prod_k (N_k!)}.  
	\end{equation} 
\end{lemma}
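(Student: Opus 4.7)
The plan is to reduce the multi-species integral to the classical single-species Tonks calculation by partitioning the integration domain according to the left-to-right order of the rods and applying the standard excluded-volume substitution on each piece. Set $M := \sum_k N_k$ and relabel the rods uniformly: for each pair $(k,j)$ with $1\leq j\leq N_k$, write $\ell_{(k,j)} := \ell_k$. The symmetry factor $1/\prod_k N_k!$ in \eqref{eq:Z-def} will remain untouched; all of the work consists in showing that the remaining integral equals $\bigl(L - \sum_k N_k\ell_k\bigr)^M$.

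First I would decompose the integration domain by orderings. For each permutation $\sigma$ of the $M$ labels, let $C_\sigma$ be the subregion on which $x_{\sigma(1)} < x_{\sigma(2)} < \cdots < x_{\sigma(M)}$; up to a measure-zero boundary the cells $C_\sigma$ are disjoint and cover the entire integration domain $\prod_{k,j}[0,L-\ell_k]$. On $C_\sigma$ the non-overlap indicator, together with the containment constraints, is equivalent to the chain $0 \leq x_{\sigma(1)}$, $x_{\sigma(i+1)} \geq x_{\sigma(i)} + \ell_{\sigma(i)}$ for $i=1,\ldots,M-1$, and $x_{\sigma(M)} + \ell_{\sigma(M)} \leq L$.

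Next, on each cell I would perform the unit-Jacobian substitution $y_i := x_{\sigma(i)} - \sum_{j<i} \ell_{\sigma(j)}$. The critical observation is that $\sum_{i=1}^M \ell_{\sigma(i)} = \sum_k N_k \ell_k$ is independent of $\sigma$, so the constraints collapse to the standard simplex $0 \leq y_1 \leq y_2 \leq \cdots \leq y_M \leq L - \sum_k N_k \ell_k$, whose volume is $\bigl(L - \sum_k N_k \ell_k\bigr)^M / M!$. In particular, every cell $C_\sigma$ contributes the same amount.

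Finally I would sum the $M!$ identical contributions and divide by $\prod_k N_k!$ to arrive at \eqref{eq:Z}. There is no real analytic obstacle; the only point requiring care is the bookkeeping between the $M!$ orderings of the distinguishable labels $(k,j)$ (which reassemble the full integration domain from the cells $C_\sigma$) and the species-symmetry factor $\prod_k N_k!$ (the indistinguishability of rods of the same length), which must stay outside the summation.
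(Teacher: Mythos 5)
Your proof is correct and takes essentially the same route as the paper: decompose by the left-to-right order of the rods, translate away the accumulated excluded volume to land on a simplex of volume $(L-\sum_k N_k\ell_k)^M/M!$, and count orderings. The only (cosmetic) difference is bookkeeping — you sum over all $M!$ permutations of distinguishable labels and keep $1/\prod_k N_k!$ outside, whereas the paper first absorbs the factorials by restricting to within-color ordered sectors and then sums over the $\binom{M}{N_1,N_2,\ldots}$ color sequences; these agree since $M!/\prod_k N_k!=\binom{M}{N_1,N_2,\ldots}$.
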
 

\begin{proof}
	The factorials in Eq.~\eqref{eq:Z-def} can be dropped if we integrate only over sectors where rods of a given type are labelled from left to right, i.e., $x_{k1} \leq \cdots \leq x_{kN_k}$ for all $k$. Write $M:= \sum_k N_k$. The partition function $Z_L(\vect{N})$ is a sum of integrals of the form
	\begin{equation*} 
		\int_{\ell_{k(1)}+\cdots+\ell_{k(M)}}^L \dd x_M \int_{\ell_{k(1)}+\cdots+\ell_{k(M-1)}}^{x_M-\ell_{k(M)}} \dd x_{M-1} \cdots \int_{\ell_{k(1)}}^{x_2- \ell_{k(2)}}  \dd x_1.
	\end{equation*}
	Here $x_j$ represent the end points of rods, and the sum is over color assignments $k(1),\ldots,k(M)$ compatible with $N_1,N_2,\ldots$. There are $\binom{M}{N_1,N_2,\ldots}$ such assignments, and each integral equals $M!^{-1} (L- \sum_k N_k \ell_k)^M$. 
\end{proof} 

Next, we note that Eq.~\eqref{eq:Z} can be reinterpreted in terms of Poisson random variables. 
Given $S>0$ and $\theta \in (-\infty, \theta^*)$, let 
$N_k(\omega)$, $k \in \N$, 
 be independent Poisson random variables $N_k(\omega) \sim \mathrm{Poiss}( (L-S) z_k \exp( \theta \ell_k))$ defined on some common probability space $(\Omega,\mathcal{F}, \P_{\theta,S})$. A straightforward computation shows that if $N_1,N_2,\ldots$ are integers (in $\N_0$) such that $\sum_k N_k \ell_k =S$, then 
\begin{multline} \label{eq:poisson}  
	\boldsymbol{z}^{\boldsymbol{N}}Z_L(N_1,N_2,\ldots)\\
 = \exp\Bigl( (L-S) \sum_k z_k e^{\theta \ell_k} - \theta S \Bigr) 
			\P_{\theta,S}\Bigl( \forall k \in \N:\  N_k(\omega)=N_k \Bigr). 
\end{multline} 
Note that the right-hand side depends on $\theta$, but the left-hand side does not. Eq.~\eqref{eq:poisson} suggests that in the grand-canonical ensemble, the packing fraction satisfies a large deviations principle with convex rate function $I(\sigma)+ p(\boldsymbol{z})$ where 
\begin{equation}
  I(\sigma):= \sup_{\theta<\theta^*}\Bigl[  \theta \sigma -(1-\sigma) g(\theta) \Bigr].
\end{equation}
Remember $g(\theta)$ and $\theta^*$ from Eqs.~\eqref{eq:thetastar} and~\eqref{eq:g}.
Let $u^*:= \lim_{\theta \nearrow\theta^*} g'(\theta)$. Set 
\begin{equation*} 
  \sigma^*:= \begin{cases}
    \frac{u^*}{1+ u^*}, & \quad \text{if}\ u^*<\infty,\\
    1, &  \quad \text{if}\ u^*=\infty.
\end{cases}
\end{equation*}

\begin{lemma}\label{lem:pvar}
  We have 
  \begin{equation*} 
    \min_{\sigma\in [0,1]} I(\sigma) = -  p(\boldsymbol{z}).
  \end{equation*} 
  Moreover 
  \begin{enumerate}
    \item [(a)] In the fluid regime $I(\sigma)$ has a unique minimizer $\sigma(\boldsymbol{z}) \in (0,\sigma^*)$ given by $\sigma(\boldsymbol{z}) = [\sum_k \ell_k z_k \exp( - \ell_k p(\boldsymbol{z}))]/[1 + \sum_k \ell_k z_k \exp( - \ell_k p(\boldsymbol{z}))]$. 
     \item [(b)] In the transition regime the minimizers of $I(\sigma)$ consist precisely of the elements of $[\sigma^*,1]$. 
     \item [(c)]In the close-packing regime $I(\sigma)$ has the unique minimizer $\sigma = 1$. 
  \end{enumerate}
\end{lemma}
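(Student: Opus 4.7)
The plan is to treat $I(\sigma)$ as a perturbed Legendre transform of $g$ and to leverage its convexity. For each $\sigma \in [0,1)$ the function $\phi_\sigma(\theta) := \theta\sigma - (1-\sigma) g(\theta)$ is strictly concave on $(-\infty, \theta^*)$, so $I(\sigma) = \sup_\theta \phi_\sigma(\theta)$ is convex on $[0,1]$ as a supremum of affine functions of $\sigma$. The stationarity condition $\phi_\sigma'(\theta) = \sigma - (1-\sigma) g'(\theta) = 0$ reads $g'(\theta(\sigma)) = \sigma/(1-\sigma)$. Since $g'$ is non-decreasing with limit $u^*$ at $\theta^*$, this has an interior solution $\theta(\sigma) < \theta^*$ exactly when $\sigma < \sigma^*$; otherwise the supremum is approached as $\theta \nearrow \theta^*$, yielding $I(\sigma) = \theta^*\sigma - (1-\sigma) g(\theta^*)$ by monotone convergence.

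Next I would apply the envelope theorem on $(0,\sigma^*)$ to differentiate $I$, obtaining
\begin{equation*}
I'(\sigma) = \theta(\sigma) + g(\theta(\sigma)).
\end{equation*}
Interior stationary points of $I$ therefore coincide exactly with solutions of the fixed point equation $g(\theta) = -\theta$ of Theorem~\ref{thm:pressure}. At any such solution $\theta_0$ a direct substitution gives
\begin{equation*}
I(\sigma) = \theta_0 \sigma - (1-\sigma) g(\theta_0) = \theta_0 \sigma + (1-\sigma) \theta_0 = \theta_0 = -p(\boldsymbol{z}),
\end{equation*}
while $\sigma = g'(\theta_0)/(1+g'(\theta_0))$, matching the formula in part (a).

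The three regimes then fall out from the sign of $\theta^* + g(\theta^*)$ (read as $+\infty$ if $\theta^* = \infty$). In the fluid regime this quantity is strictly positive, so an interior fixed point $\theta_0 < \theta^*$ exists by the intermediate value theorem, and the computation above produces a unique interior minimizer $\sigma(\boldsymbol{z}) \in (0, \sigma^*)$, giving (a). In the close-packing regime $\theta^* + g(\theta^*) < 0$, so the affine piece of $I$ on $[\sigma^*, 1]$ has strictly negative slope; convexity then forces $I'(\sigma) \leq \theta^* + g(\theta^*) < 0$ on all of $(0,1)$, so the unique minimizer is $\sigma = 1$ with $I(1) = \theta^* = -p(\boldsymbol{z})$, which is (c). In the transition regime $\theta^* + g(\theta^*) = 0$, so $I \equiv \theta^*$ on $[\sigma^*, 1]$; convexity together with $I'(\sigma^*-) = 0$ and the strict monotonicity of $\sigma \mapsto \theta(\sigma)$ force $I$ to be strictly decreasing on $[0, \sigma^*]$, so the set of minimizers is exactly $[\sigma^*, 1]$, which is (b).

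The main delicate step is the boundary analysis at $\theta = \theta^*$: when $u^* < \infty$ one must justify both the identity $I(\sigma) = \theta^*\sigma - (1-\sigma)g(\theta^*)$ for $\sigma \in [\sigma^*, 1]$ (via monotone convergence of $g$) and the matching of one-sided derivatives $I'(\sigma^*-) = \theta^* + g(\theta^*) = I'(\sigma^*+)$, ensuring that the interior and boundary formulas splice into a single convex function on $[0,1]$. A minor secondary subtlety is treating degenerate cases where some $\ell_k$ vanish or where $g$ fails to be strictly convex, which only affect the behavior at $\sigma = 0$ and do not alter the global picture.
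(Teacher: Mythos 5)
Your proof is correct and follows essentially the same route as the paper: identify the stationary points of the convex function $I$ with solutions of the fixed point equation $g(\theta)=-\theta$, evaluate $I$ there to get $\theta_0=-p(\boldsymbol{z})$, and handle the affine piece beyond $\sigma^*$ separately in each regime. The only difference is cosmetic — the paper performs the change of variables $u=\sigma/(1-\sigma)$ and minimizes $\phi(u)/(1+u)$ with $\phi$ the Legendre transform of $g$, whereas you differentiate $I$ directly via the envelope theorem; the resulting first-order conditions are identical.
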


\begin{proof}
  	Let $\phi(u):= \sup_{\theta \in \R} [\theta u - g(\theta)]$ be the Legendre transform of $g(\theta)$, where we agree $g(\theta) = \infty$ for $\theta>\theta^*$.  
In $(0,u^*)$, $\varphi(u)$ is strictly convex and smooth; its derivative $\theta = \varphi'(u)$ solves $g'(\theta) = u$, and as  $u\searrow 0$, we have $\phi(u)\to 0$ and $\theta = \phi'(u) \to -\infty$. 
 On $[u^*,\infty)$, $\varphi$ is affine with slope $\theta^*$. 
We note 
\begin{equation*} 
  \inf_{\sigma \in (0,1)} I(\sigma) = \inf_{\sigma \in (0,1)} (1- \sigma) \phi\bigl( \frac{\sigma}{1-\sigma}\bigr) 
   = \inf_{u>0} \frac{\phi(u)}{1+u}. 
\end{equation*} 
For $u\in (0,u^*)$, we have
\begin{equation} \label{eq:deriv}  
  \begin{aligned}
	\frac{\dd}{\dd u} \frac{\phi(u)}{1+u} &= \frac{ \phi'(u)  (1+u) - \phi(u)}{(1+u)^2},\\
   \frac{\dd}{\dd u}\bigl( \phi'(u)  (1+u) - \phi(u)\bigr) 
		&= \phi''(u) (1+u) >0.
   \end{aligned} 
\end{equation} 
Hence the derivative of $\phi(u)/(1+u)$ vanishes at $u$  if and only if $\phi'(u)(1+u) = \phi(u)$. Write $\theta = \phi'(u)$ and remember $g'(u) = \theta$. Then the equation for a vanishing derivative becomes $\theta u + \theta = \phi(u) = \theta u - g(\theta)$, i.e., $\theta = - g(\theta)$ and we recognize our old fixed point equation. 

In the fluid regime the fixed point equation has the unique solution $\theta(\boldsymbol{z}) = - p(\boldsymbol{z})<\theta^*$, and $u = g'(\theta(\boldsymbol{z})) = \sum_k \ell_k z_k \exp( - \ell_k p(\boldsymbol{z})) \in (0,u^*)$ is a strict local minimizer of $\phi(u)/(1+u)$ in $(0,u^*)$. Correspondingly $\sigma = u/(1+u) \in (0,\sigma^*)$ is a strict local minimizer of $I(\cdot)$ in $(0,\sigma^*)$. But $I(\cdot)$, as the supremum of a family of affine functions, is convex, so $\sigma$ is the unique minimizer of $I(\cdot)$ in all of $(0,1)$, and 
\begin{equation*} 
  \inf_{\sigma' \in [0,1]} I(\sigma')= I(\sigma) = \frac{\phi(u)}{1+u}= \phi'(u) = \theta = - p. 
\end{equation*} 
This proves part (a) of the lemma. 

In the transition regime  a variant of the  previous argument shows that $I(\sigma)$ is decreasing on $(0,\sigma^*)$ and $I(\sigma^*)= \theta^*= - p(\boldsymbol{z})$. If $\sigma^*=1$, we are done. If $\sigma^*<1$, then $u^*<\infty$ and for $u>u^*$ 
\begin{equation*} 
  \frac{\phi(u)}{1+u} = \frac{\phi(u^*) + \theta^*(u-u^*)}{1+u} = \frac{\theta^* u^* - g(\theta^*) + \theta^*(u- u^*)}{1+u} = \theta^*= - p(\boldsymbol{z}).
\end{equation*} 
It follows that $I$ is constant on $[\sigma^*,1]$. This proves (b). 

In the close-packing regime the fixed point equation has no solution and $I(\sigma)$ is decreasing on $(0,\sigma^*)$. On $(\sigma^*,1)$ or $(u^*,\infty)$ we have 
\begin{equation*} 
  \frac{\phi(u)}{1+u} = \frac{- g(\theta^*) + \theta^*u}{1+u} = \theta^* + \frac{- g(\theta^*) - \theta^*}{1+u}
\end{equation*} 
which is decreasing in $u$. Thus $I(\sigma)$ is decreasing in all of $(0,1)$ and 
\begin{equation*} 
 I(1) = \inf_{\sigma \in (0,1)} I(\sigma)= \theta^*= - p(\boldsymbol{z}).
\end{equation*} 
This proves part (c). 
\end{proof}

\begin{lemma} \label{lem:ldp}
  Let $[\sigma_1,\sigma_2] \subset [0,1]$. Then 
  \begin{equation*} 
      \limsup_{L\to \infty} \frac{1}{L} \log \sum_{\boldsymbol{N}=(N_1,N_2,\ldots)} \boldsymbol{z}^{\boldsymbol{N}} Z_L(\boldsymbol{N}) \mathbf{1}\Bigl( \tfrac{1}{L}\sum_{k} N_k \ell_k  \in [\sigma_1,\sigma_2] \Bigr)  
\leq - \min_{\sigma \in [\sigma_1,\sigma_2]} I(\sigma). 
    \end{equation*} 
\end{lemma}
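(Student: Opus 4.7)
The plan is a Cram\'er--Chernoff style large deviations upper bound, using the Poisson representation~\eqref{eq:poisson} as the exponential tilting identity and partitioning the compact interval $[\sigma_1,\sigma_2]$ into fine sub-intervals to extract the Legendre transform defining $I(\sigma) = \sup_{\theta<\theta^*}[\theta\sigma -(1-\sigma)g(\theta)]$.

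The starting point is a variant of~\eqref{eq:poisson} with an $\vect{N}$-independent reference measure. For $\sigma^*\in[0,1)$ and $\theta<\theta^*$, let $\P^{\sigma^*,\theta}$ denote the product of Poisson laws on $\N_0^\N$ with parameters $\lambda_k := L(1-\sigma^*) z_k e^{\theta\ell_k}$. Expanding $\prod_k \lambda_k^{N_k} e^{-\lambda_k}/N_k!$ gives
\begin{equation*}
\vect{z}^{\vect{N}} Z_L(\vect{N}) = \left(\frac{1 - S/L}{1-\sigma^*}\right)^{\sum_k N_k} \exp\Bigl(L(1-\sigma^*) g(\theta) - \theta S\Bigr)\,\P^{\sigma^*,\theta}\bigl(\tilde{\vect{N}} = \vect{N}\bigr),
\end{equation*}
where $S:=\sum_k N_k\ell_k$; this is a direct rearrangement of the same computation as in~\eqref{eq:poisson}. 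The key point is that whenever $S/L \geq \sigma^*$ the multinomial-type prefactor is at most $1$, turning the identity into a clean upper bound.

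I then partition $[\sigma_1,\sigma_2]$ into $m$ sub-intervals $J_j = [\tau_j, \tau_j+\delta]$ with $\delta = (\sigma_2-\sigma_1)/m$. For each $j$, I apply the identity above with $\sigma^* = \tau_j$ and with $\theta=\theta_j$ chosen within $\eps$ of the infimum $\inf_{\theta<\theta^*} [(1-\tau_j)g(\theta) - \theta\tau_j]$, which by definition equals $-I(\tau_j)$. Bounding $\sum_{\vect{N}}\P^{\tau_j,\theta_j}(\tilde{\vect{N}}=\vect{N}) \leq 1$ and $e^{-\theta_j S} \leq e^{-\theta_j \tau_j L + |\theta_j|\delta L}$ for $S/L\in J_j$ yields
\begin{equation*}
\sum_{\vect{N}:\, S/L\in J_j} \vect{z}^{\vect{N}} Z_L(\vect{N}) \leq \exp\Bigl(L\bigl[-I(\tau_j) + \eps + |\theta_j|\delta\bigr]\Bigr),
\end{equation*}
after which summing over the $m$ pieces, applying $\frac{1}{L}\log$, and taking $\limsup_L$ absorbs the $\frac{\log m}{L}$ combinatorial correction and leaves $\max_j[-I(\tau_j) + \eps + |\theta_j|\delta]$.

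Sending first $\delta\to 0$ and then $\eps\to 0$, lower semicontinuity of the Legendre transform $I$ turns $\max_j(-I(\tau_j))$ into $-\min_{\sigma\in[\sigma_1,\sigma_2]} I(\sigma)$, which is the desired right-hand side. The main obstacle is controlling the error $|\theta_j|\delta$ uniformly in $j$. On any compact subinterval of the interior of the effective domain of $I$, the near-minimizers $\theta_j$ stay in a bounded subset of $(-\infty,\theta^*)$ and the error vanishes. At the endpoints $|\theta(\sigma)|$ can diverge ($\theta\to -\infty$ as $\sigma\to 0$, and $\theta\to\theta^*=+\infty$ in the degenerate case when $\sigma^*=1$). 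The remedy is to split $[\sigma_1,\sigma_2]$ into an interior piece, where the clean argument applies, and two thin boundary strips on which $I$ blows up, so that the corresponding contribution is dominated by the interior bound no matter how crudely $\theta$ is chosen on the strips.
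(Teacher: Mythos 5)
Your setup coincides with the paper's: the tilted Poisson identity you write is exactly the bound~\eqref{eq:ldpbound} (taking the reference packing fraction equal to the left endpoint of the slice so that the multinomial-type prefactor is at most one), and the slicing of $[\sigma_1,\sigma_2]$ with near-optimal tilts $\theta_j$ is the same Chernoff argument the paper uses. For $\sigma_1>0$ this closes, since the optimizers satisfy $\theta(\sigma)\geq\theta(\sigma_1)>-\infty$ uniformly on the interval and the error $|\theta_j|\delta$ vanishes as $\delta\to 0$.

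The gap is in your treatment of the left endpoint $\sigma_1=0$, which is precisely the case needed for Theorem~\ref{thm:pack-frac}. You correctly identify that $\theta(\sigma)\to-\infty$ as $\sigma\to 0$, but your remedy --- discard thin boundary strips ``on which $I$ blows up'' --- fails there: $I(0)=\sup_{\theta<\theta^*}\bigl(-g(\theta)\bigr)=-\lim_{\theta\to-\infty}g(\theta)\leq 0$ is finite, and worse, the minimizer $\sigma(\vect{z})$ of $I$ lies arbitrarily close to $0$ when the activities are small, so the strip near $\sigma=0$ can carry the \emph{dominant} contribution and cannot be dismissed. (Your reasoning is fine at the right endpoint: either $\theta^*<\infty$ and the tilts are bounded above by $\theta^*$, or $\theta^*=\infty$ and then $I(\sigma)\to\infty$ as $\sigma\to 1$.) The paper closes the hole at $\sigma=0$ with a different device: on a slice $[\sigma,\sigma+\eps]\subset[0,\sigma_0]$ it tilts with $\theta=\theta(\sigma+\eps)<0$, the optimizer at the \emph{right} endpoint of the slice; then $-\theta S\leq -\theta(\sigma+\eps)L$ combines exactly with the Legendre transform evaluated at $\sigma+\eps$, and the leftover error is $\eps\, g(\theta)\leq \eps\, g(\theta_0)$, uniformly bounded because $g$ is increasing and stays bounded as $\theta\to-\infty$. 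In short, the unbounded error $|\theta_j|\delta$ must be traded for the bounded error $\eps\,g(\theta)$; without that step (or an equivalent one) your estimate does not close near $\sigma=0$.
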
 

\begin{proof} 
  We first estimate contributions from small intervals $[\sigma, \sigma + \eps]\subset [\sigma_1,\sigma_2]$.  Abbreviate $S= \sum_k N_k \ell_k$ and remember the measures $\P_{\theta, S}$ defined above Eq.~\eqref{eq:poisson}. 
Observe 
\begin{equation}  \label{eq:ldpbound}
    \vect{z}^{\vect{N}} Z_L(\boldsymbol{N}) \leq \exp \Bigl( - \theta S + (L-L\sigma) g(\theta) \Bigr) \P_{\theta,L\sigma} \Bigl(\forall k:\,N_k(\omega) = N_k\Bigr)
 \end{equation} 
for all $\theta$. We choose $\theta$ as the maximizer $\theta(\sigma)$ of $\theta \sigma - (1- \sigma) g(\theta)$. If $\theta \geq 0$, we have $- \theta S \leq - \theta \sigma L$ for all $S\in [\sigma L, (\sigma + \eps)L]$ and 
\begin{multline*} 
  \sum_{\boldsymbol{N}} \boldsymbol{z}^{\boldsymbol{N}} Z_L(\boldsymbol{N}) \mathbf{1}\bigl( S/L  \in[\sigma, \sigma + \eps]\bigr) \\
 \leq \exp\Bigl( - L I(\sigma) \Bigr) \P_{\theta, L\sigma}\Bigl( \tfrac{1}{L} \sum_k N_k(\omega) \ell_k \in [\sigma, \sigma + \eps] \Bigr)
\end{multline*} 
hence 
\begin{equation*}
  \limsup_{L\to \infty} \frac{1}{L} \log\sum_{\boldsymbol{N}} \boldsymbol{z}^{\boldsymbol{N}} Z_L(\boldsymbol{N})\mathbf{1}\bigl( S/L\in [\sigma, \sigma+\eps] \bigr) \leq - I(\sigma).
\end{equation*} 
If $\theta \leq 0$, we have instead $-\theta S \leq - \theta L( \sigma + \eps)$ and 
\begin{equation*}
  \limsup_{L\to \infty} \frac{1}{L} \log\sum_{\boldsymbol{N}} \boldsymbol{z}^{\boldsymbol{N}} Z_L(\boldsymbol{N})\mathbf{1}\bigl( S/L\in [\sigma, \sigma+\eps] \bigr) \leq - I(\sigma) - \theta \eps.
\end{equation*} 
If $\sigma_1>0$, then $\theta(\sigma) \geq \theta(\sigma_1)=:\theta_1$ stays bounded away from $-\infty$. We split $[\sigma_1,\sigma_2]$ into $m$ slices of width $\eps = (\sigma_2 - \sigma_1)/m$ and obtain the upper bound
\begin{equation*} 
  - \min_{\sigma \in [\sigma_1,\sigma_2]} I(\sigma) + \eps \max(- \theta_1,0).
\end{equation*} 
Letting $\eps \to 0$ then yields the desired bound.

If $\sigma_1 =0$ we have to proceed more carefully. Fix 
$\sigma_0$ such that $\theta_0 = \theta(\sigma_0) <\min(\theta^*,0)$. For $S\in [\sigma, \sigma + \eps] \subset [0,\sigma_0]$ we apply the bound~\eqref{eq:ldpbound} to $\theta = \theta(\sigma+\eps)\leq \theta_0 <0$ and find 
\begin{equation*} 
   \limsup_{L\to \infty} \frac{1}{L} \log\sum_{\boldsymbol{N}} \boldsymbol{z}^{\boldsymbol{N}} Z_L(\boldsymbol{N})\mathbf{1}\bigl( S/L\in [\sigma, \sigma+\eps] \bigr) \leq - I(\sigma+\eps) + \eps g(\theta).
\end{equation*} 
Note that $g(\theta) \leq g(\theta_0)$ is bounded uniformly in $\sigma\in [0,\sigma_0]$. Splitting $[0,\sigma_0]$ into slices of width $\eps$ and letting first $L\to \infty$ and then $\eps \to 0$, we obtain an upper bound for the sum over packing fractions in $[0,\sigma_0]$ of the desired form. This works for every sufficiently small $\sigma_0$. Combined with the bounds for intervals $[\sigma_1,\sigma_2]$ with $\sigma_1>0$, this proves the lemma.   
\end{proof}

Now we can prove Theorem~\ref{thm:pack-frac}. 

\begin{proof}[Proof of Theorem~\ref{thm:pack-frac}]
  By Lemma~\ref{lem:pvar}, in both situations (a) and (b) of the theorem, $I(\sigma)$ has a unique minimizer $\sigma_0$, given by $\sigma(\boldsymbol{z})$ or $\sigma =1$. Let $\eps>0$. Then by Lemma~\ref{lem:ldp} the grand-canonical probability of seeing a packing fraction in $[0, \sigma_0-\eps]$ or $[\sigma_0 + \eps, 1]$ is bounded by 
\begin{equation*} 
  \exp\Bigl(- L \min_{|\sigma- \sigma_0|>\eps} \bigl( I(\sigma) + p(\boldsymbol{z})\bigr) +o(L)\Bigr)
\end{equation*} 
which goes to zero exponentially fast as $L\to \infty$. 
\end{proof}


\section{Computation of the pressure-activity expansion} \label{sec:trees}

In this section we show that the fixed point equation for the pressure determines the coefficients of the (formal) activity expansion uniquely. The convergence of the expansion is dealt  with in Section~\ref{sec:convergence}. 
Because of alternating signs, it is convenient to work with $F(\vect{z})= - p (- \vect{z})$.

\subsection{Continuous system} The functional equation~\eqref{eq:pressure-equation} for the pressure translates into 
\begin{equation} \label{eq:F} 
	F(\vect{z}) = \sum_k z_k \exp( \ell_k F(\vect{z})). 
\end{equation}
Let $F(\vect{z}) = \sum_{\boldsymbol{n}} a(\boldsymbol{n}) \vect{z}^{\vect{n}}$ be a formal power series. The right-hand side of Eq.~\eqref{eq:F} can be read as a formal power series whose coefficients are polynomials of $a(\boldsymbol{n})$; we say that Eq.~\eqref{eq:F} holds in the sense of formal power series if the coefficients are equal, i.e., 
\begin{equation*} 
  \forall \boldsymbol{n}\in \mathcal{I}:\   [ \vect{z}^{\vect{n}}]F(\vect{z}) = [ \vect{z}^{\vect{n}}] \sum_k z_k \exp( \ell_k F(\vect{z})).
\end{equation*} 
The notation ``$[ \vect{z}^{\vect{n}}]F(\vect{z})$'' stands for ``coefficient of the monomial $\prod_k z_k^{n_k}$ in the formal power series $F(\vect{z})$''.

\begin{prop} \label{prop:tonks-formula}
   The formal power series $F(\boldsymbol{z}) = \sum_{\boldsymbol{n}} a(\boldsymbol{n}){\boldsymbol{z}}^{\boldsymbol{n}}$ satisfies Eq.~\eqref{eq:F} if and only if $a(\boldsymbol{n}) = \frac{1}{\boldsymbol{n}!} (\sum_k \ell_k n_k)^{\sum_k n_k -1}$ for all $\boldsymbol{n}\in \mathcal{I}^*$, and $a(\boldsymbol{0})=0$. 
\end{prop}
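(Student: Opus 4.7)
The strategy is to reduce the multivariate fixed-point equation to a one-variable equation and then invoke the classical Lagrange inversion formula. Matching constant terms in $F = \sum_k z_k \exp(\ell_k F)$ immediately forces $a(\vect{0})=0$, and extracting the coefficient of $\vect{z}^{\vect{n}}$ from the right-hand side (after expanding each $\exp(\ell_k F)$ as a power series in $F$) yields a recursion that expresses $a(\vect{n})$ in terms of the $a(\vect{m})$ with $\sum_k m_k < \sum_k n_k$. This already establishes uniqueness of the formal power series solution.

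To identify the coefficients explicitly, I will introduce a bookkeeping variable $t$ together with independent indeterminates $(c_k)_{k\in \N}$ and set $\widetilde{F}(t) := F(tc_1,tc_2,\ldots)$. Substituting into~\eqref{eq:F} turns it into the one-variable implicit equation
\[
    \widetilde{F}(t) = t\,\phi\bigl(\widetilde{F}(t)\bigr), \qquad \phi(y) := \sum_k c_k \exp(\ell_k y),
\]
to which the classical Lagrange inversion formula applies and gives $[t^n]\widetilde{F}(t) = \tfrac{1}{n}\,[y^{n-1}]\phi(y)^n$.

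Applying the multinomial theorem to $\phi(y)^n$ and reading off the coefficient of $y^{n-1}$ from the resulting exponential series produces
\[
    [t^n]\widetilde{F}(t) = \frac{1}{n}\sum_{\vect{m}:\,\sum_k m_k=n}\frac{n!}{\vect{m}!}\,\vect{c}^{\vect{m}}\,\frac{(\sum_k m_k \ell_k)^{n-1}}{(n-1)!} = \sum_{\vect{m}:\,\sum_k m_k=n} \frac{\vect{c}^{\vect{m}}}{\vect{m}!}\Bigl(\sum_k m_k \ell_k\Bigr)^{n-1}.
\]
On the other hand, by definition of $\widetilde{F}$ one has $[t^n]\widetilde{F}(t) = \sum_{\vect{n}:\,\sum_k n_k=n} a(\vect{n})\,\vect{c}^{\vect{n}}$. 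Equating coefficients of each monomial $\vect{c}^{\vect{n}}$ gives $a(\vect{n}) = \tfrac{1}{\vect{n}!}\bigl(\sum_k \ell_k n_k\bigr)^{\sum_k n_k - 1}$, as claimed.

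The main technical point is that $\phi$ involves countably many variables $c_k$, so single-variable Lagrange inversion does not apply verbatim. This is resolved either by restricting to the finitely many $c_k$ for which $n_k>0$ (on which $a(\vect{n})$ depends) before applying the formula, or by observing that the standard formal proof of Lagrange inversion is valid over the polynomial ring $\mathbb{Q}[c_1,c_2,\ldots]$ and extracts each coefficient $\vect{c}^{\vect{n}}$ separately. Beyond this point the argument is a direct calculation, the key content being the reduction to a single-variable equation.
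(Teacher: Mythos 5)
Your proof is correct, and it follows a genuinely different route from the paper. The paper gives two proofs: one via the multivariate Lagrange--Good inversion formula applied to the system $z_k = w_k \exp(-\ell_k \sum_j w_j)$, and one via a bijective/combinatorial argument interpreting $F$ as the generating function of colored labelled rooted trees. Your proof avoids both: the observation that the nonlinearity in $F = \sum_k z_k \exp(\ell_k F)$ enters only through the \emph{scalar} $F$ means that the substitution $z_k = t c_k$ collapses the problem onto the total-degree variable $t$, where classical one-variable Lagrange inversion (over the coefficient ring $\mathbb{Q}[c_1, c_2, \ldots]$) finishes the job with a multinomial expansion. This is arguably more elementary than the paper's first proof, since it sidesteps the need for the full Lagrange--Good machinery and the computation of the Jacobian determinant $\det(\partial z_k / \partial w_j) = 1 - \sum_k \ell_k w_k$. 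The price is that the trick is somewhat specific to fixed-point equations with scalar recursive dependence, whereas the paper's Lagrange--Good approach is more robust and transfers directly to the discrete case (Proposition~\ref{prop:tonks-formula-discrete}), where the equation $\exp(F) = 1 + \sum_k z_k \exp(kF)$ can be handled in the same framework; your reduction would also apply there after taking logarithms, so the difference is more of style than substance. Your handling of the countably-many-variables issue at the end is the right thing to say; the uniqueness-by-triangularity argument at the start matches what the paper does before invoking inversion.
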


We give two independent proofs, a direct proof based on the fixed point equation and Lagrange-Good inversion~\cite{good60}, and a combinatorial proof based on a connection between Eq.~\eqref{eq:F} and recurrence relations between weighted trees. The first proof is more robust and is easily adapted to the discrete model.  The second proof is of interest as it provides yet another example for the deep connections between combinatorics and cluster expansions \cite{faris10}. 

\begin{proof}[Proof of Proposition~\ref{prop:tonks-formula} via Lagrange-Good inversion]
  The identity~\eqref{eq:F} is equivalent to a system of equations for the expansion coefficients $a(\vect{n})$. The system is nonlinear but has a triangular form, i.e., $a(\vect{n})$ is expressed in terms of coefficients $a(\vect{m})$ corresponding to total degree $\sum_k m_k <\sum_k n_k$. The existence and uniqueness of a solution is therefore easily proven by induction over $\sum_k n_k$. Thus we are left with the computation of the coefficients. Clearly $a(\vect{0}) = F(\vect{0}) = 0$. For non-zero multi-indices, write $w_k(\vect{z}):= z_k \exp( \ell_k F(\vect{z}) )$. We have 
	\begin{equation*} 
		z_k = w_k \exp( - \ell_k \sum_j w_j), \quad k\in \N. 
	\end{equation*} 
	We look for the expansion of $F(\vect{z} ) =\sum_j w_j(\vect{z}) $ in powers of the $z_k$s. A formal computation suggests 
	\begin{align*} 
		[\vect{z}^{\vect{n}}] (\sum_j w_j(\vect{z})) & = 
			\prod_k \Bigl(\frac{1}{2\pi \mathrm{i}} 
			\oint \frac{\dd z_k}{z_k^{n_k+1}} \Bigr) (\sum_j w_j(\vect{z})) \\
			& =  \prod_k \Bigl(\frac{1}{2\pi \mathrm{i}} 
			\oint \frac{\dd w_k}{z_k(\vect{w})^{n_k+1}} \Bigr) (\sum_j w_j) \det \Bigl( \bigl(\frac{\partial z_k}{\partial w_j}\bigr)_{k,j} \Bigr) \\
			& = \prod_k \Bigl(\frac{1}{2\pi \mathrm{i}} 
			\oint \frac{\dd w_k}{w_k^{n_k+1}} \Bigr) (\sum_j w_j)
				e^{\sum_k (n_k+1) \ell_k \sum_j w_j} \det \Bigl( \bigl(\frac{\partial z_k}{\partial w_j}\bigr)_{k,j} \Bigr)
	\end{align*} 
	The Lagrange-Good inversion formula \cite{good60,ehrenborg-mendez} says that indeed 
	\begin{equation*} 
		[\vect{z}^{\vect{n}}] (\sum_{j=1}^\infty w_j(\vect{z})) 
		 = [\vect{w}^{\vect{n}}] \, \Bigl\{ (\sum_{j=1}^\infty w_j)
				e^{\sum_{k\in {\rm supp}\, \vect{n}} (n_k+1) \ell_k \sum_{j=1}^\infty w_j} \det \Bigl( \bigl(\frac{\partial z_k}{\partial w_j}\bigr)_{k,j \in \mathrm{supp}\, \vect{n}} \Bigr) \Bigr\}. 			
	\end{equation*} 
	The partial derivative is
	$	\frac{\partial z_k }{\partial w_\ell} = \bigl( \delta_{k \ell} - \ell_k w_k\bigr) \exp\bigl(- \ell_k \sum_j w_j \bigr)$
	and the determinant equals
	\begin{equation*} 
		\det\Bigl(\delta_{k\ell} - \ell_k w_k\Bigr)_{k,\ell \in {\rm supp}\, \vect{n}} = 1 - \sum_{k\in {\rm supp}\, \vect{n}} \ell_k w_k.
	\end{equation*} 
	Note for matrices $A$ of rank one, $\det(\mathrm{id} + A) = 1+ \sum_{j=1}^n a_{jj}$.
	 It follows that 
	\begin{align*}
		 [\vect{z}^{\vect{n} } ] (\sum_j w_j) &= [\vect{w}^{\vect{n}}]   (\sum_j w_j) (1- \sum_j \ell_j w_j) e^{(\sum_k n_k \ell_k) (\sum_j w_j)}. 
	\end{align*}  
	Now 
	\begin{align*}
		[\vect{w}^{\vect{m}}] (\sum_j w_j) e^{(\sum_k n_k \ell_k) (\sum_j w_j)} 
		& =  [\vect{w}^{\vect{m}}] \, \frac{1}{(\sum_j m_j-1)!} \bigl( \sum_k n_k \ell_k)^{\sum_j m_j-1 } (\sum_j w_j)^{\sum_j m_j} \\  
		& = \frac{\sum_j m_j}{\prod_j (m_j!)} \bigl( \sum_k n_k \ell_k)^{\sum_j m_j-1 }.
	\end{align*} 
	Therefore ($|\vect{n}|=\sum_k n_k$)
	\begin{align*} 
		[\vect{z}^{\vect{n}}] (\sum_j w_j) & = \frac{|\vect{n}|} {\vect{n} !} 
			\bigl( \sum_k n_k \ell_k\bigr)^{|\vect{n}|-1 } 
			- \sum_j \ell_j \frac{|\vect{n}|-1}{\vect{n}! } n_j \bigl(\sum_k n_k \ell_k\bigr)^{|\vect{n}|-2} \\ 
		& = \frac{1}{\vect{n}!} \bigl( \sum_k n_k \ell_k\bigr)^{|\vect{n}|-1}. \qedhere
	\end{align*} 	 
\end{proof} 

For the combinatorial proof of Proposition~\ref{prop:tonks-formula} we identify $F(\vect{z})$ as an exponential generating function of colored labelled weighted trees. 
First we need some definitions. A \emph{colored set} is a pair $(V,c)$ consisting of a finite set $V$ and a color map $c:V \to \N$.  We consider two colored sets as equivalent if for each color $k$, they have the same number $n_k$ of elements of a given color; equivalently, if there is a color-preserving bijection between them. Given $\vect{n} \in \mathcal{I}^*$, let $(V,c)$ be one representative of the equivalence class with color numbers $n_k$. For example, we can choose $V = \{(1,1),\ldots,(1,n_1), (2,1),\ldots,(2,n_2),\ldots\}$ with the natural color mapping $c(k,n) = k$.  Let $\mathcal{T}(\vect{n})$ be the collection of rooted trees on $V$.  We think of such trees as  \emph{colored rooted trees labelled within colors}. To each such tree we assign a weight given by 
\begin{equation*}
	w(\gamma):= \prod_{e\in E(\gamma)} \ell_{c(\text{father in }e)}.
\end{equation*} 
Put differently, the weight of an edge originating in a father with color $k$ is $\ell_k$, and the weight of the tree is the product of the edge weights. The corresponding exponential generating function is 
\begin{equation*} 
	G(\vect{z}) = \sum_{\vect{n}\in \mathcal{I}^*} \frac{ \vect{z}^{\vect{n}}}{\vect{n}!} \sum_{\gamma \in \mathcal{T}(\vect{n})} w(\gamma). 
\end{equation*}

\begin{proof}[Proof of Proposition~\ref{prop:tonks-formula} via combinatorics] 
The combinatorial proof is in two steps: 1. show $F(\vect{z}) = G(\vect{z})$, 2. compute the sum of weights $\sum_{\gamma \in \mathcal{T}(\vect{n})} w(\gamma)$.

First we note that $G(\vect{z})$ satisfies the same functional equation as $F$. This can be seen by summing over the color of the root, and then over the subtrees attached to the root, see~\cite{good61} for similar relations for colored trees.  The functional equation determines the power series coefficients $[\vect{z}^{\vect{n}}] G(\vect{z})$ uniquely, as can be seen for example by an induction over $\sum_k n_k$. As a consequence, $F(\vect{z})= G(\vect{z})$. 

For the computation of the weights, let $d_i$ be the degree of a vertex in a given graph $\gamma$. Each vertex is the father of $d_{i} -1$ children, except the root which has $d_i$ children. Thus we can compute the sum of weights by summing first over all possible degree distributions and then over the choice of the root. We use the formula for the number of labelled (unrooted) trees with a given degree distribution \cite{moon70}, write  $m=|\vect{n}|=\sum_k n_k$, and obtain 
\begin{align*} 
	\sum_{\gamma \in \mathcal{T}(\vect{n})} w(\gamma) 
	& = \sum_{ (d_i)_{1\leq i \leq m}:\, \sum_i d_i = 2 m-2 } \binom{m-2}{d_1-1,\ldots,d_m-1} 
		\Bigl(\prod_{i=1}^m \ell_{c(i)}^{d_i -1} \Bigr) \Bigl( \sum_{i=1}^m \ell_{c(i)} \Bigr) \\ 
	& = \Bigl( \sum_{i=1}^m \ell_{c(i)} \Bigr)^{m-1} = \bigl( \sum_k n_k \ell_k\bigr)^{\sum_k n_k -1}.
\end{align*} 
Here we have implicitly chosen our trees as trees on $V= \{1,\ldots,m\}$; recall that $c:V\to \N$ is a fixed map such that $\# \{i\in V\mid c(i) =k\} =n_k$.
\end{proof} 

\subsection{Discrete system} 
For the discrete system, the fixed point equation after sign flip becomes 
\begin{equation}  \label{eq:F-discrete}
  \exp\bigl( F(\vect{z}) \bigr) = 1 + \sum_{k=1}^\infty z_k \exp\bigl( k F(\vect{z}) \bigr). 
\end{equation} 

\begin{prop} \label{prop:tonks-formula-discrete}
   The formal power series $F(\boldsymbol{z}) = \sum_{\boldsymbol{n}} a(\boldsymbol{n}){\boldsymbol{z}}^{\boldsymbol{n}}$ satisfies Eq.~\eqref{eq:F-discrete} if and only if
\begin{equation*} 
 a(\boldsymbol{n}) = \frac{1}{\prod_k n_k!}\times \frac{ ( \sum_k k n_k -1)!}{(\sum_k kn_k -\sum_k n_k)!}.
\end{equation*} 
 for all $\boldsymbol{n}\in \mathcal{I}^*$, and $a(\boldsymbol{0})=0$. 
\end{prop}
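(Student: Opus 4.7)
The plan is to adapt the Lagrange-Good proof of Proposition~\ref{prop:tonks-formula} to the discrete setting. Uniqueness of the formal power series solution of~\eqref{eq:F-discrete} follows from the same triangularity argument as before: expanding both sides and matching coefficients expresses $a(\vect{n})$ in terms of coefficients of strictly smaller total degree $|\vect{m}|<|\vect{n}|$, while $a(\vect{0})=0$ is forced by setting $\vect{z}=\vect{0}$. Thus it suffices to compute the coefficients.

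Following the parametrization $w_k(\vect{z}):=z_k\exp(kF(\vect{z}))$, equation~\eqref{eq:F-discrete} rewrites as $\exp(F)=1+\sum_j w_j$, so
\[
F=\log\bigl(1+\sum_j w_j\bigr),\qquad z_k=w_k\bigl(1+\sum_j w_j\bigr)^{-k}.
\]
Setting $s:=1+\sum_j w_j$, the Jacobian reads $\partial z_k/\partial w_j = s^{-k}(\delta_{kj}-kw_k/s)$, and the inner matrix is a rank-one perturbation of the identity whose determinant (over the finite index set $\supp\vect{n}$) equals $(s-\sum_k kw_k)/s$. Lagrange-Good inversion~\cite{good60,ehrenborg-mendez} then gives
\[
[\vect{z}^{\vect{n}}]F(\vect{z})=[\vect{w}^{\vect{n}}]\Bigl\{\log s\cdot\prod_{k\in\supp\vect{n}} s^{k(n_k+1)}\cdot\det\Bigl(\frac{\partial z_k}{\partial w_j}\Bigr)_{k,j\in\supp\vect{n}}\Bigr\},
\]
and the powers of $s$ telescope so that
\[
[\vect{z}^{\vect{n}}]F=[\vect{w}^{\vect{n}}]\,\log s\cdot s^{N-1}\bigl(s-\sum_k kw_k\bigr),\qquad N:=\sum_k kn_k.
\]

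For the coefficient extraction, the identity $[\vect{w}^{\vect{n}}]h(\sum_j w_j)=(m!/\vect{n}!)[x^m]h(x)$ with $m:=|\vect{n}|$ reduces matters to computing two univariate coefficients of $\log(1+x)(1+x)^N$ and $\log(1+x)(1+x)^{N-1}$. The key simplification is the derivative identity
\[
\frac{d}{dx}\bigl[\log(1+x)(1+x)^N\bigr]=(1+x)^{N-1}+N\log(1+x)(1+x)^{N-1},
\]
which gives the clean relation $m\,[x^m]\log(1+x)(1+x)^N - N\,[x^{m-1}]\log(1+x)(1+x)^{N-1}=\binom{N-1}{m-1}$. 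Substituting and simplifying yields
\[
[\vect{z}^{\vect{n}}]F=\frac{(m-1)!}{\vect{n}!}\binom{N-1}{m-1}=\frac{(N-1)!}{\vect{n}!\,(N-m)!},
\]
which is the claimed formula since $N-m=\sum_k kn_k-\sum_k n_k$. The main obstacle I anticipate is spotting this derivative identity: without it one is left with an awkward double sum $\sum_j (-1)^{j-1}\binom{N}{m-j}/j$ for $[x^m]\log(1+x)(1+x)^N$ that does not obviously collapse. The Lagrange-Good setup and the Jacobian reduction, by contrast, run in direct parallel to the continuous case and present no new difficulty.
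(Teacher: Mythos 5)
Your proposal is correct and follows essentially the same route as the paper's proof: uniqueness by triangularity, then Lagrange--Good inversion with $w_k=z_k\exp(kF)$ and the rank-one Jacobian determinant, reducing to univariate coefficients of $(1+x)^N\log(1+x)$. The only divergence is the final extraction, where the paper computes $[x^m](1+x)^N\log(1+x)=\binom{N}{m}(H_N-H_{N-m})$ via harmonic numbers and lets the difference collapse to $H_N-H_{N-1}=1/N$, while your derivative identity reaches $\binom{N-1}{m-1}$ directly --- a slightly cleaner but equivalent finish.
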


We provide a proof based on Lagrange-Good inversion and leave open the combinatorial interpretation. Note that $G(\vect{z})= \exp( F(\vect{z}))$ can be interpreted as an ordinary generating function for non-labelled, colored planar trees. For example, when all activities except for a given rod length $k$ vanish, Eq.~\eqref{eq:F-discrete} becomes $G= 1+ z_k G^k$ which is the generating function for $k$-ary trees (every vertex has exactly $k$ children, only non-leaf vertices are counted), and the expansion coefficients of $G$ are generalized Catalan numbers~\cite{hilton-pedersen}. When $k=2$, $G=1+z_2G^2$ is the generating function of the standard Catalan numbers $\frac{1}{n+1}\binom{2n}{n}$ and is associated with binary trees. The interpretation of $F = \log G$, however, is less clear. 

\begin{proof} 
  The identity $F = \log (1 + \sum_k z_k \exp(kF))$ leads to a system of equations for the expansion coefficients $a(\vect{n})$ that is triangular, and the existence and uniqueness of the solution can be proven by induction just as for Proposition~\ref{prop:tonks-formula}. Thus we are left with the computation of the coefficients.

  Let $w_k= z_k \exp( k F(\vect{z}))$. Then $\exp(F) = 1+ \sum_{k=1}^\infty w_k$ and $z_k = w_k/(1+\sum_{j=1}^\infty w_j)^k$. We compute 
  \begin{equation*} 
    \frac{\partial z_k}{\partial w_\ell} = \frac{\delta_{k\ell}}{(1+\sum_{j=1}^\infty w_j)^k} - \frac{k w_k}{(1+\sum_{j=1}^\infty w_j)^{k+1}}
  \end{equation*} 
  and for every finite non-empty set $I \subset \N$ 
  \begin{equation*} 
    \det \bigl( \frac{\partial z_k}{\partial w_\ell} \bigr)_{k,\ell \in I}  
      = \frac{1}{(1+ \sum_{j=1}^\infty w_j)^{\sum_{k\in I} k}} \Bigl( 1- \frac{\sum_{k \in I}k w_k}{1+\sum_{j=1}^\infty w_j} \Bigr).
  \end{equation*}
  The identity $a(\vect{0}) = 0$ is obvious. Let $\vect{n} \in \mathcal{I}^*$ be a non-zero multi-index. By Lagrange-Good inversion, we have
 \begin{align*} 
&	[\vect{z}^{\vect{n}}] F(\vect{z}) \\
	& \ 	 = [\vect{w}^{\vect{n}}] \, \Bigl\{ \log(1+ \sum_{j=1}^\infty w_j)
				(1+\sum_{j=1}^\infty w_j)^{\sum_{k\in {\rm supp}\, \vect{n}} k(n_k+1)} \det \Bigl( \bigl(\frac{\partial z_k}{\partial w_j}\bigr)_{k,j \in \mathrm{supp}\, \vect{n}} \Bigr) \Bigr\} \\			
   & \ = [\vect{w}^{\vect{n}}]  (1+\sum_j w_j)^{\sum_k k n_k} \log(1+ \sum_j w_j)   \\
   & \qquad \qquad - \sum_{k \in \supp{\vect{n}}}[\vect{w}^{\vect{n}- \vect{e}_k}] k (1+\sum_j w_j)^{\sum_k k n_k-1 } \log(1+ \sum_j w_j).
\end{align*} 
Let $H_0=0$ and $H_m=1+\frac{1}{2} + \cdots + \frac{1}{m}$ be the harmonic numbers. For $m\in \N_0$ set $f_m(u)= \frac{1}{m!} (1+u)^m ( \log (1+u) - H_m)$. Then $f_{m+1}'(u) = f_m(u)$ and an induction over $m$ leads to 
\begin{equation*} 
  (1+u)^m \log(1+u) = \sum_{k=0}^m \binom{m}{k} (H_m- H_{m-k}) u^k + O(u^{m+1}). 
\end{equation*} 
Let $N= \sum_k k n_k$ and $M= \sum_k n_k$. 
It follows that 
\begin{align*} 
   [\vect{w}^{\vect{n}}] (1+\sum_j w_j)^N \log (1+ \sum_j w_j)  
   &  = \binom{M}{n_1,n_2,\ldots} \times [u^M] (1+u)^N \log(1+ u)\\
    &  =  \binom{M}{n_1,n_2,\ldots} \binom{N}{M} (H_N- H_{N-M}) \\
   & = \frac{1}{\vect{n}!} \frac{N!}{(N-M)!} (H_N- H_{N-M}).
\end{align*} 
Similarly, 
\begin{align*} 
   & [\vect{w}^{\vect{n- \vect{e}_k}}] (1+\sum_j w_j)^{N-1} \log (1+ \sum_j w_j)  \\
   &\qquad \qquad   = \binom{M-1}{n_1,\ldots, n_k-1,\ldots} \times [u^{M-1}] (1+u)^{N-1} \log(1+ u)\\
    & \qquad \qquad  =  \binom{M-1}{n_1,\ldots,n_k-1,\ldots} \binom{N-1}{M-1} (H_{N-1}- H_{N-M}) \\
   &\qquad \qquad  = \frac{n_k}{\vect{n}!} \frac{(N-1)!}{(N-M)!} (H_{N-1}- H_{N-M}).
\end{align*} 
Therefore 
\begin{align*} 
  [\vect{z}^{\vect{n}}] F(\vect{z}) &= \frac{1}{\vect{n}!} \frac{N!}{(N-M)!} (H_N- H_{N-M}) 
  - \sum_k k \frac{n_k}{\vect{n}!} \frac{(N-1)!}{(N-M)!} (H_{N-1}- H_{N-M}) \\
  & = \frac{N!}{\vect{n}! (N-M)!} (H_N- H_{N-1})  = \frac{(N-1)!}{\vect{n}!(N-M)!}.
\end{align*} 

\end{proof}


\section{Convergence of the activity expansion} \label{sec:convergence}

\subsection{Continuous system} 
Here we prove Theorem~\ref{thm:cluster-convergence}. The fixed point equation for $p$ and Proposition~\ref{prop:tonks-formula} show that if $p(\boldsymbol{z})$ has a convergent expansion, then that expansion is necessarily given by Eq.~\eqref{eq:tonks}. 
Thus it remains to investigate the convergence of the series. Because of the alternating signs, it is enough to look at
$$F(\vect{z}) = \sum_{\vect{n}\in \mathcal{I}^*} \frac{\vect{z}^{\vect{n}}}{\vect{n}!} 
	\bigl( \sum_k \ell_k n_k\bigr)^{\sum_k n_k -1}$$ 
for non-negative variables $z_k \geq 0$. 

The proof of the divergence criterion is based solely on the fixed point equation.

\begin{proof}[Proof of Theorem~\ref{thm:cluster-convergence}(b)]
	Take variables $z_k \geq 0$. Recall  from Section~\ref{sec:trees} that $F(\vect{z}) = \sum_k z_k \exp( \ell_k F(\vect{z}))$ in the sense of formal power series. If $F(\vect{z})$ is finite, the identity holds as an identity between positive numbers. Since $F\geq 0$, it follows that $\inf_a (\sum_k z_k \exp( \ell_k a) - a) \leq \sum_k z_k \exp( \ell_k F) - F = 0$. Taking the converse, we see that if $\inf_a (\sum_k z_k \exp(\ell_k a) - a) >0$, then $F(\vect{z}) = \infty$. 
\end{proof} 

For the proof of the convergence criterion, we use the explicit expression of the coefficients and Poisson variables (compare the proof of Theorem~\ref{thm:pack-frac} in Section~\ref{sec:fixed-point}). 

\begin{proof}[Proof of the convergence in  Theorem~\ref{thm:cluster-convergence}(a)]
	Take $z_k \geq 0$.
For $V\in \mathcal{V}:= \{\sum_k n_k \ell_k \mid \vect{n} \in \mathcal{I}^*\}$, let 
\begin{equation*}
	h_V:= \sum_{\vect{n} \in \mathcal{I}^*} \frac{\vect{z}^{\vect{n}}}{\vect{n!}} V^{\sum_k n_k} \, 
		 \mathbf{1}\Bigl(\sum_k n_k \ell_k = V\Bigr).
\end{equation*} 
Note $F(\vect{z}) = \sum_{V\in \mathcal{V}} h_V/V.$
  
Assume first that $\sum_k z_k\exp(a \ell_k)<a$ for some $a>0$.  For $V>0$, let 
	$R_k \sim \mathrm{Poiss}(V z_k \exp( a \ell_k))$, $k\in \N$ be independent Poisson random variables defined on some common probability space $(\Omega,\mathcal{F}, \P_V)$.
	Note that 
	\begin{equation*}\label{eq:hv-proba}	
		 h_V =  e^{- \alpha V}\P_V( \sum_k \ell_k R_k(\omega) = V),\quad \alpha:= a-\sum_k z_k\exp(a \ell_k)>0. 
	\end{equation*}
	More generally if $V \in \mathcal{V} \cap [n,n+1)$ for some $n\in \N$, then 
	\begin{align*} 
	  h_V  &\leq \sum_{\vect{n} \in \mathcal{I}^*} \frac{\vect{z}^{\vect{n}}}{\vect{n!}} (n+1)^{\sum_k n_k} \, \mathbf{1}\Bigl(\sum_k n_k \ell_k = V\Bigr)  \\
	      & = \P_{n+1} \Bigl( \sum_k \ell_k R_k(\omega) = V\Bigr) e^{- a V  + (n+1) \sum_k \exp( a \ell_k)} \\	
	      & \leq \P_{n+1} \Bigl( \sum_k \ell_k R_k(\omega) = V\Bigr) e^{a - \alpha(n+1) }. 
	\end{align*}
	Therefore 
	\begin{equation*}
		\sum_{n=1}^\infty\ \sum_{V\in \mathcal{V} \cap [n,n+1)} \frac{h_V}{V} 	
			\leq \sum_{n=1}^\infty \frac{1}{n} e^{a - \alpha (n +1) } < \infty.  				\end{equation*} 	
	Note that if $\ell_k \to 0$, the set $V \in \mathcal{V}\cap [n,n+1)$ can be countably infinite. 
	In addition, 
	\begin{equation*} 
		\sum_{V\in \mathcal{V}:\ V\leq 1} \frac{h_V}{V} \leq 
			\sum_{\vect{n} \in \mathcal{I}^*} \frac{\vect{z}^{\vect{n}}}{\vect{n}!} 
				= \exp \bigl( \sum_k z_k\bigr) -1 < e^a -1< \infty.  
	\end{equation*}
	 It follows that $F(\vect{z}) < \infty$. Next we show, still under the assumption $\sum_k z_k \exp( a \ell_k) <a$, that 
          \begin{equation}\label{eq:Fbounds}
             F(\boldsymbol{z}) < a. 
  \end{equation} 
         To this aim consider the curve $y= \sum_{k} z_k \exp( \ell_k \theta)$, $\theta \geq 0$. It is the graph of a convex increasing function that starts at $g(0) >0$, i.e., above the line $y = \theta$ and is below the line at $\theta=a$, $g(a)<a$. 
         It must cross the line at some $\theta_1<a$ and possibly crosses it again at some $\theta_2 >a$. Moreover $g'(\theta_1) <1 <g'(\theta_2)$. We already know that $F(\boldsymbol{z})$ solves $g(F) = F$. If the fixed point equation has a unique solution, then $F= \theta_1$ and we are done. If the equation has two solutions, we invoke a continuity argument: for $t\in [0,1]$, write $t \boldsymbol{z}= (tz_k)_{k\in \N}$. Then $F(t\boldsymbol{z})$ solves $\sum_k t z_k \exp( \ell_k \theta) = \theta$. For $t$ close to $1$, the latter equation has two solutions $\theta_1(t)$, $\theta_2(t)$ with $\theta_{1,2}(1) = \theta_{1,2}$ and 
\begin{equation} 
   \sum_k  \ell_k t z_k e^{\theta_1(t) \ell_k} <1< \sum_k \ell_k t z_k e^{\theta_2(t) \ell_k}.
  \end{equation}
It follows that as $t\searrow 0$, either $\theta_2(t)$ ceases to exist or $\theta_2(t) \to \infty$. On the other hand the power series expansion for $F$ converges for all $t\leq 1$ and has no zero-order term, thus $t\mapsto F(t \boldsymbol{z})$ is continuous on $[0,1]$ and converges to $0$ as $t\searrow 0$. It follows that $F(t\boldsymbol{z}) = \theta_1(t)$ for all $t\in [0,1]$. Therefore $F(\boldsymbol{z}) = \theta_1<a$.
This proves Eq.~\eqref{eq:Fbounds}.

    Now consider the case $\sum_k z_k \exp( a \ell_k) =a$ for some $a>0$. Let $t\in [0,1)$. Then $\sum_k (tz_k) \exp( a\ell_k) = ta <a$, hence 
    \begin{equation*} 
       F(t \vect{z}) = \sum_{\boldsymbol{n}} \frac{\vect{z}^{\vect{n}}}{\vect{n}!} (\sum_k n_k \ell_k)^{\sum_k n_k - 1} t^{\sum_k n_k} <a.
    \end{equation*} 
    We let $t\nearrow 1$, use Abel's theorem and conclude that $F(\vect{z}) \leq a <\infty$. 
\end{proof} 

\subsection{Discrete system} Here we prove Theorem~\ref{thm:convergence-discrete}. 
Because of the alternating signs and Proposition~\ref{prop:tonks-formula-discrete}, it is enough to look at 
\begin{equation*} 
  F(\vect{z}) = \sum_{\vect{n} \in \mathcal{I}^*} \frac{\vect{z}^{\vect{n}}}{\vect{n}!} \frac{(\sum_k k n_k -1)!}{(\sum_k k n_k - \sum_k n_k)!} 
\end{equation*} 
for non-negative $z_k$. 

The proof of the divergence criterion is based on the fixed point equation and is completely analogous to the continuous setting. 
\begin{proof}[Proof of Theorem~\ref{thm:convergence-discrete}(b)]
  If $F(\vect{z})<\infty$, then it satisfies the fixed point equation $\exp(F(\vect{z}) =1 + \sum_k z_k \exp( k F(\vect{z}))$. Setting $a= F$ we find $\sum_k z_k \exp( a k) \leq  \exp(a) - 1$. It follows that if $\sum_k z_k \exp( ak) >\exp(a)-1$ for all $a$, then $F(\vect{z}) = \infty$. 
\end{proof}

For the sufficiency of the convergence criterion, we have to replace the Poisson distribution employed in the continuous setting by something else. 

\begin{proof}[Proof of the convergence in Theorem~\ref{thm:convergence-discrete}(a)]
  Assume first that 
  \begin{equation*} 
    \frac{1+ \sum_{k\geq 1} z_k \exp(ka)}{\exp(a)} =: \exp( - \alpha) <1
  \end{equation*}
  for some $a>0$. 
  Let 
\begin{equation*} 
    p_0:= \frac{1}{1+\sum_{j\geq 1} z_j \exp(aj)},\quad p_k:=\frac{z_k\exp(ak)}{1+\sum_{j\geq 1} \exp(aj)} \ (k\in \N). 
  \end{equation*} 
   Fix $S\in \N$. Consider $S$ independent random variables with values in $\N_0$ and probability distribution $(p_k)$. We may think of $S$ boxes that are either blank or have the color $k$, and each box chooses its color independently according to the probability weights $p_k$. Let $(n_k)_{k\in \N}$ be a sequence of non-negative integers with $M=\sum_k n_k \leq S$. The probability that  $S-M$ of the boxes are blank and the remaining $M$ boxes have exactly $n_k$ boxes of color $k$, for all $k\in \N$, is given by 
\begin{align*} 
&  \binom{S}{S-M,n_1,n_2,\ldots} p_0^{S-M} \prod_{k\geq 1} p_k^{n_k} \\
  &\quad  =  \binom{S}{S-M,n_1,n_2,\ldots} \frac{1}{(1+\sum_j z_j \exp( aj))^S} \prod_{k\geq 1}(z_k e^{ak})^{n_k}   \\
  & \quad = \frac{S!}{(S-M)!} \frac{\vect{z}^{\vect{n}}}{\vect{n}!} \Bigl( \frac{\exp(a)}{1+ \sum_j z_j \exp(aj)}\Bigr)^S. 
\end{align*} 
It follows that 
\begin{equation*} 
  \sum_{\vect{n} \in \mathcal{I}^*} \frac{\vect{z}^{\vect{n}}}{\vect{n}!} \frac{(\sum_k k n_k -1)!}{(\sum_k k n_k - \sum_k n_k)!} \mathbf{1}\Bigl( \sum_k k N_k =S\Bigr) \leq \frac{\exp(-\alpha S)}{S}
\end{equation*} 
and
\begin{equation*} 
   F(\vect{z}) \leq \sum_{S\geq 1} \frac{\exp( - \alpha S)}{S} <\infty.
\end{equation*} 
In order to cover the case that $\sum_{k \geq 1} z_k \exp( ka) = \exp(a)-1$ for some $a>0$ but ``$<$'' fails, one exploits Abel's theorem and the fixed point equation. The procedure is in two steps: 1. 
 Backtrack to the case that $\sum_k z_k \exp(ka) <\exp(a)-1$. Then $F(\vect{z})$ is finite and  satisfies the fixed point equation $ F = \log(1+ \sum_{k\geq 1} z_k \exp(kF))=:h(F) $. If $h(\theta) =\theta$ has two solutions, use analyticity to deduce that $F(\vect{z})$ is equal to the smallest solution and that $F(\vect{z}) <a$. 2.  When $\sum_k z_k \exp(k a) =\exp(a) -1$, go to $(tz_k)_{k\in \N}$ and exploit Abel's theorem for $t\nearrow 1$. The details are analogous to the continuous setting and we leave them  to the reader. 
\end{proof}


\section{Virial expansion}\label{sec:virial}

Here we prove Theorem~\ref{thm:vir} for the continuous system. The proof of Theorem~\ref{thm:vir-discrete} for the  discrete system is similar and therefore omitted.  A heuristic derivation of the relevant expressions for  both the continuous and the discrete systems is given in Section~\ref{sec:waals}. 

\begin{proof}[Proof of Theorem~\ref{thm:vir}]
 Fix $j \in \N$ and $z_k$, $k \neq j$. The inequality $\sum_k z_k \exp( - \ell_k \theta^*) >- \theta^*$ extends by continuity to some open neighborhood of $z_j$, and 
in this neighborhood the pressure solves the equation $F(z_j,p) = 0$ with $F(z_j,p) = p- \sum_k z_k \exp( - p \ell_k)$. Since 
\begin{equation} \label{eq:sec6F}
  \frac{\partial F}{\partial p}(z_j,p) = 1 + \sum_k z_k \ell_k e^{-p \ell_k}\geq 1>0, 
\end{equation} 
the implicit function theorem shows that the partial derivative $\frac{\partial p}{\partial z_j}$ exists and satisfies 
\begin{equation}  \label{eq:sec6a}
\rho_j(\boldsymbol{z})= z_j \frac{\partial p}{\partial z_j}(\boldsymbol{z})=  \frac{z_j \exp( - p(\boldsymbol{z}) \ell_j)}{ 1 + \sum_k z_k \ell_k \exp(-p(\boldsymbol{z}) \ell_k)}.
\end{equation} 
We multiply with $\ell_j$, sum over $j$, and get 
\begin{equation}  \label{eq:sec6pack}
  \sigma(\boldsymbol{z}) = \sum_j \ell_j \rho_j(\boldsymbol{z}) = \frac{\sum_jz_j \ell_j \exp( - p(\boldsymbol{z}) \ell_j)}{1+ \sum_j z_j \ell_j \exp( - p(\boldsymbol{z}) \ell_j)} <1. 
\end{equation} 
If we sum Eq.~\eqref{eq:sec6a} right away and use the fixed point equation for $p(\boldsymbol{z})$, we get 
\begin{equation*}
  \sum_j \rho_j(\boldsymbol{z}) = \frac{p(\boldsymbol{z})}{1+ \sum_k z_k \ell_k \exp( - p(\boldsymbol{z}) \ell_k)} = p(\boldsymbol{z})\bigl( 1- \sum_k \ell_k \rho_k(\boldsymbol{z}) \bigr), 
\end{equation*} 
which yields the expression of the pressure in terms of the densities. Eq.~\eqref{eq:sec6a} also shows 
\begin{equation*} 
  z_j = \rho_j(\boldsymbol{z}) e^{ p(\boldsymbol{z}) \ell_j}\bigl( 1+ \sum_k z_k \ell_k e^{- p(\boldsymbol{z}) \ell_k}\bigr) = \frac{ \rho_j(\boldsymbol{z})}{1- \sum_k \ell_k \rho_k(\boldsymbol{z})} e^{\ell_j p(\boldsymbol{z})}. \qedhere
\end{equation*} 
\end{proof}

\section{The inverse function theorem does not apply} \label{sec:inverse}


\begin{proof} [Proof of Theorem~\ref{thm:inverse}] 
  Suppose by contradiction that $\rho(\cdot)$ is a bijection from $U_a$ onto $V_b$ with $U_a$ and $V_b$ neighborhoods of the origin in suitable spaces $E_a$, $E_b$. Let $\eps>0$, $\tilde \rho_k:= \eps k^{-3} \exp( - b k)$. We choose $\eps>0$ sufficiently small so that $C:=\sum_k k \rho_k <1$ and $\tilde {\vect{\rho}} = (\tilde \rho_k)_{k\in \N} \in V_b$. Since $\rho(\cdot)$ is assumed to be a bijection from $U_a$ onto $V_b$, there is a unique $\vect{z} \in U_a$ such that $\vect{\tilde \rho} = \vect{\rho}(\vect{z})$.  Theorem~\ref{thm:vir} shows that this activity vector is given by $z_k = (1-C)^{-1} \rho_k \exp( kp(\vect{z}))>0$, and we have $p(\vect{z})>0$. 
  Therefore 
\begin{equation*} 
  \frac{1}{1-C} \sum_k z_k e^{ak} = \frac{\eps}{1-C}\sum_k \frac{1}{k^3} e^{(p(\vect{z})+a-b)k} <\infty
\end{equation*} 
and $b \geq p+a >a$. Next set $z_k = - \eps k^{-3} \exp( - a k)$, with $\eps>0$ small enough so that $\vect{z} \in U_a$. Then $p(\vect{z})<0$ and  $\rho_k(\vect{z})$ is proportional to $z_k \exp(- k p(\vect{z}) )= \eps k^{-3} \exp( -(a+ p(\vect{z})) k)$. 
Since $\vect{\rho}(\vect{z}) \in V_b$ we must have $b\leq a+ p(\vect{z}) <a$. Thus $a<b$ and $b<a$, contradiction. 
\end{proof}

\textbf{Acknowledgments.} {I am indebted to R. Fern{\'a}ndez, S. J. Tate, D. Tsagkarogiannis and D. Ueltschi for many helpful discussions, and to A. van Enter for pointing out connections with the Fisher-Felderhof clusters.  This work was completed during a stay at the Institute for Computational and Experimental Research in Mathematics (ICERM), Brown University, for the semester program ``Phase Transitions and Emergent Properties.''
}

\bibliography{tonks}
\bibliographystyle{alpha}

\end{document}